\documentclass[journal]{IEEEtran}
\usepackage{amsmath,amsfonts,amssymb,mathrsfs}
\usepackage{bbm} 
\usepackage{algorithm}
\usepackage{algpseudocode}

\usepackage{tabularx}
\usepackage{booktabs}
\usepackage{makecell}

\usepackage{cite}

\usepackage[caption=false,font=footnotesize]{subfig}
\usepackage{graphicx}

\usepackage{tikz, pgfplots}
\usetikzlibrary{shapes, arrows, arrows.meta, fit, positioning, calc, 3d}
\usepgfplotslibrary{fillbetween}
\pgfplotsset{compat=1.18} 
\tikzset{>=latex}

\newtheorem{theorem}{Theorem}
\newtheorem{lemma}{Lemma}
\newtheorem{remark}{Remark}
\newtheorem{definition}{Definition}

\newtheorem{corollary}{Corollary}

\allowdisplaybreaks
\usepackage[acronym]{glossaries}
\newcommand{\newac}{\newacronym}
\newcommand{\ac}{\gls}

\newac{rs}{RS}{randomized stationary}
\newac{iid}{\emph{i.i.d.}}{independent and identically distributed}
\newac{zoh}{ZOH}{zero-order-hold}
\newac{aoi}{AoI}{age of information}
\newac{iot}{IoT}{internet of things}
\newac{iaws}{IAWS}{independence-assumed weighted-sum}
\newac{cra}{CRA}{confidential reconstruction accuracy}
\newac{los}{LOS}{line of sight}
\newac{nlos}{NLOS}{non-line of sight}
\newac{lra}{LRA}{legitimate reconstruction advantage}
\newac{ara}{ARA}{adversarial reconstruction advantage}
\newac{tsms}{TSMS}{two-state Markov source}
\newac{utms}{UTMS}{uniform-transition Markov source}
\newac{mmse}{MMSE}{minimum mean square error estimator}

\begin{document}

\title{Legitimate Reconstruction Advantage in Wiretapped Semantics-Aware Remote Monitoring}
\author{Bowen~Li,~\IEEEmembership{Member,~IEEE},~and~Nikolaos~Pappas,~\IEEEmembership{Senior~Member,~IEEE}
\thanks{The authors are with the Department of Computer and Information Science, Link\"oping University, Link\"oping 58183, Sweden (e-mail: bowen.li@liu.se; nikolaos.pappas@liu.se). This work has been supported by ELLIIT, the European Union (6G-LEADER, 101192080).}
}
\maketitle
\begin{abstract}
This paper investigates joint accuracy and security in semantics-aware remote monitoring, where source updates intended for a legitimate receiver may also be intercepted by an eavesdropper. We define legitimate reconstruction advantage (LRA) as the joint event that the intended receiver correctly reconstructs the current source state using received updates while the eavesdropper reconstructs it incorrectly. Source evolution changes the relevance of received updates at the two receivers, so their marginal reconstruction accuracies do not, in general, determine the probability of this joint event. Under randomized stationary (RS) transmission and zero-order-hold reconstruction at both receivers, we derive an explicit joint stationary distribution and closed-form expressions for LRA probability and persistence for general finite-state ergodic Markov sources. For two-state and uniform-transition sources, we obtain algebraic expressions that enable low-complexity optimization of RS policies. Analytical and numerical results show that maximizing the probability of legitimate advantage need not maximize its persistence. For two-state sources, rare-transmission analysis reveals that, at a fixed sum of transition probabilities, greater transition asymmetry can increase persistence while reducing LRA probability. Moreover, optimizing marginal accuracy and security metrics can select policies that are suboptimal for LRA. Counterintuitively, more frequent transmission or improved legitimate-channel reliability can increase legitimate reconstruction accuracy while reducing LRA probability. Finally, a geofencing application translates prescribed LRA requirements into spatial constraints on eavesdropper locations.
\end{abstract}

\begin{IEEEkeywords}Legitimate reconstruction advantage, semantics-aware remote monitoring, stationary analysis, geofencing.
\end{IEEEkeywords}

\section{Introduction}
Wireless remote monitoring enables cyber-physical systems to track evolving physical processes by transmitting status updates over wireless links for remote state reconstruction and control or decision-making~\cite{BurChaLam:J18,ParColFisLu:J18,KouPap:J21}.
Owing to the broadcast nature of wireless propagation, remote monitoring may operate over a wiretap channel, where updates intended for the legitimate receiver can also be intercepted by an eavesdropper and used to infer sensitive system states, potentially facilitating subsequent attacks~\cite{LiuXiaLiLia:J12,ZhoShaChe:J24,GilYemChoNed:J25}. 
Remote monitoring under eavesdropping must therefore address accuracy and security, supporting accurate legitimate reconstruction while limiting the eavesdropper's ability to reconstruct the current source state. 
For example, market-state updates intended for an authorized recipient can support timely trading decisions, while interception can give an unauthorized party access to the same decision-relevant information.

Classical physical-layer security studies characterize secure message transmission through metrics such as secrecy capacity, achievable secrecy rate, and secrecy outage probability~\cite{Wynn:J75,MukFakHuaSwi:J14}.
In remote monitoring, receivers track an evolving source through the latest successfully delivered update, whose time-varying relevance to the current source state has motivated semantics-aware and freshness-aware communication~\cite{SunUysEliYat:J17,KosPapAng:J17,YatSunBroKau:J21} and further inspired freshness-based security~\cite{ZhaXuLanChe:J24,XuHuZhuYua:J24,ZouCheZhuYan:J26,WanCheMohPap:C24,ZhaNiHaoLi:J25,ZhaHaoDinNi:J25}.
Two common approaches are used to characterize freshness-based security.
One focuses on the age associated with the latest secure update, with update security defined according to the underlying transmission model, such as a legitimate-reception/eavesdropping-failure event or satisfaction of a prescribed physical-layer secrecy condition~\cite{ZhaXuLanChe:J24,XuHuZhuYua:J24,ZouCheZhuYan:J26}.
The other focuses on the relative information ages at the two receivers (legitimate and eavesdropper) and evaluates secrecy through their age difference, for example, by taking the positive excess of the eavesdropper’s age over that of the legitimate receiver~\cite{WanCheMohPap:C24,ZhaNiHaoLi:J25,ZhaHaoDinNi:J25}.
Both approaches characterize security via receiver asymmetry, but age alone is insufficient to determine whether legitimate reconstruction succeeds while eavesdropper reconstruction fails.
Depending on the source dynamics, a staler observation may be correct while a fresher one is not, or observations of different ages may both be correct, or both be incorrect~\cite{LuoPap:J26,MaaKriAssEph:J20}.
Thus, characterizing joint accuracy and security requires accounting for source dynamics.

Some studies characterize source dynamics using discrete-time linear time-invariant systems and define security in terms of asymmetric minimum mean-square error estimation requirements at the two receivers~\cite{LeoQueDolDey:J19,TsiGatPap:J20,CheQuLi:J26}.
These formulations account for both receivers through marginal performance objectives or constraints, such as by keeping the legitimate receiver’s error covariance bounded while driving the eavesdropper’s to infinity~\cite{TsiGatPap:J20}.
However, such marginal, receiver-wise second-order statistics do not directly characterize the simultaneous occurrence of legitimate reconstruction success and eavesdropper reconstruction failure. Jointly optimizing receiver-wise criteria does not, in general, determine the policy that best supports this joint reconstruction outcome.
Beyond protection against eavesdropping, temporal source dynamics have been explicitly incorporated into the analysis and mitigation of false-data-injection attacks~\cite{FogBonPanPas:J26}, denial-of-service attacks~\cite{KriAssSol:J25}, and timing side-channel attacks~\cite{MasChiTalZan:J26}, as well as into source-aware privacy mechanisms such as age-dependent differential privacy~\cite{ZhaWeiBerHua:J24}.
These studies exploit source dynamics in constructing attack and defense mechanisms and motivate source-aware design for secure remote monitoring.

This work addresses two fundamental questions: \emph{(i) how to characterize joint accuracy and security} under source dynamics, and \emph{(ii) how this joint characterization reshapes transmission policy design}. To this end, we investigate joint accuracy and security in semantics-aware remote monitoring under \ac{rs} policies, with both the legitimate receiver and a passive eavesdropper adopting \ac{zoh} reconstruction, and make the following three main contributions:

(1) We introduce \ac{lra}, defined as the joint event of legitimate reconstruction success and eavesdropper reconstruction failure, to jointly characterize accuracy and security in terms of probability and persistence. For an arbitrary time-homogeneous finite-state ergodic Markov source, we derive closed-form expressions for both metrics by the stationary analysis of a three-dimensional state process and establish their rational dependence on the system parameters.

(2) For two-state and uniform-transition sources, we derive algebraic expressions for the \ac{lra} probability and persistence, which enable tractable analyses of their rare-transmission limits and system-parameter optimization.
The rare-transmission limits share a common channel-induced receiver-separation factor but capture different aspects of source dynamics: state mismatch for \ac{lra} probability and mismatch-weighted state sojourns for \ac{lra} persistence; numerical results indicate that this structure extends to general Markov sources.
We further prove that optimizing the transmission probability is equivalent to polynomial root finding, reducing the complexity from $\mathcal O(N^9/\epsilon)$ to $\widetilde{\mathcal O}(\log(1/\epsilon))$,\footnote{
The notation $\widetilde{\mathcal O}(\cdot)$ suppresses polylogarithmic factors. For scalar polynomial root finding, the bit complexity is
$\mathcal O((r_F+b)r_F^2\operatorname{polylog}(r_F+b))$, where $r_F=\deg (\mathcal P_F)$ and $b=\lceil\log_2(1/\epsilon)\rceil$ is the required bit precision~\cite{Vic:J96}.} for these structured sources and to $\mathcal O(1)$ in special cases.

(3) Numerical results validate the theoretical analysis through Monte Carlo simulations and reveal that the \ac{lra} probability and persistence can favor different transmission policies.
They further demonstrate that optimizing weighted marginal objectives can select policies that are suboptimal for the joint LRA objective, potentially leading to inefficient transmission policies and inaccurate assessments of joint reconstruction performance.
The joint characterization further reveals that more frequent transmissions or a better legitimate channel need not improve \ac{lra} probability, particularly when the eavesdropping channel is strong.
Finally, a geofencing application demonstrates the practical relevance of the proposed \ac{lra} metrics by translating prescribed \ac{lra} requirements into spatial constraints on eavesdropper locations.

\section{System Model}\label{sec:system_model}
We consider a remote monitoring system over a wiretap channel, comprising a Markov source, a transmitter (Alice), a legitimate receiver (Bob), and an eavesdropper (Eve), as illustrated in Fig.~\ref{fig:system_model}. The objective is to enable accurate reconstruction at Bob while Eve’s reconstruction is incorrect. Next, we first specify the system model and then define the \ac{lra} metrics.

\subsection{Remote Monitoring System Model}\label{subsec:remote_estimation}
We consider a remote monitoring system operating in slotted time, indexed by $t\in\mathcal T\triangleq\{1,2,\ldots\}$.
The source state remains constant within each slot and evolves across slots, while the transmission decision, packet delivery, and receiver-side reconstruction are completed within the same slot.

\subsubsection{Source Process}
We consider a finite-state time-homogeneous Markov source $\{X_t\}_{t\in\mathcal T}$ with state space $\mathcal N\triangleq\{1,2,\ldots,N\}$, where $X_t$ denotes the source state in slot $t\in \mathcal T$.
Let $\mathbf Q=[p_{i,j}]_{i,j\in\mathcal N}$ denote its transition matrix, where
\begin{equation*}
    p_{i,j}
    \triangleq
    \Pr[X_{t+1}=j\mid X_t=i],
    \, i,j\in\mathcal N.
\end{equation*}
The source chain is assumed to be ergodic and admits a unique stationary distribution $ \mathbf v=[v_1,v_2,\ldots,v_N]^{\mathsf{T}}$.

\subsubsection{Transmission Policy}
At each slot $t\in\mathcal{T}$, Alice makes a transmission decision $A_t\in\{0,1\}$. If $A_t=1$, Alice samples the current source state $X_t$ and transmits the resulting update; otherwise, Alice remains idle.
In this work, we focus on \ac{rs} policies, under which $\{A_t\}_{t\in\mathcal{T}}$ is an \ac{iid} Bernoulli process independent of the source process.
Specifically,
\begin{equation*}
\Pr[A_t=1]=p_\mathrm{a},\Pr[A_t=0]=1-p_\mathrm{a},\, t\in\mathcal{T},
\end{equation*}
where $p_\mathrm{a}\in(0,1]$ denotes the transmission probability.

\begin{figure}
    \centering
    \includegraphics[width=1\linewidth]{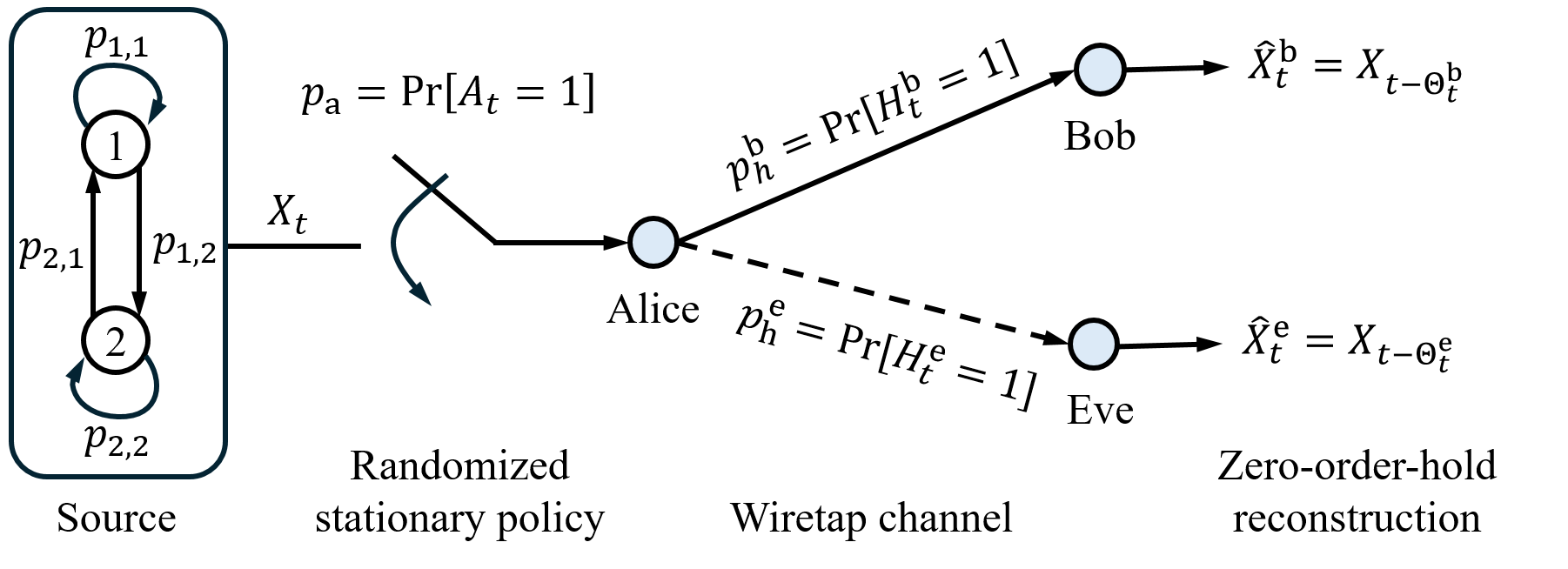}
    \caption{Remote monitoring system over a wiretap channel with a transmitter (Alice), a legitimate receiver (Bob), and an eavesdropper (Eve).}
    \label{fig:system_model}
\end{figure}

\subsubsection{Erasure Wiretap Channel}
Let $\mathcal R\triangleq\{\mathrm b,\mathrm e\}$ denote the receiver index set, where $\mathrm b$ and $\mathrm e$ correspond to Bob and Eve, respectively.
For each receiver $r\in\mathcal R$, the link from Alice to receiver $r$ is modeled as an erasure channel.
Specifically, for each slot $t\in\mathcal T$, let $H_t^r\in\{0,1\}$ denote the binary erasure-channel state of link $r$, with $H_t^r=1$ indicating successful reception of a transmitted update and $H_t^r=0$ indicating erasure.\footnote{This abstraction follows classical erasure-channel, packet-erasure, and erasure wiretap channel models~\cite{ThoJoy:B06,PhiMih:b07,SurSubThaBlo:C10}.}
The wiretap channel-state processes $\{H_t^{\mathrm b}\}_{t\in\mathcal T}$ and $\{H_t^{\mathrm e}\}_{t\in\mathcal T}$ are independent Bernoulli processes, each \ac{iid} over time and independent of both the source process and the transmission decision. Specifically,
\begin{equation*} 
\Pr[H_t^r=1]=p_\mathrm{h}^r, \, \Pr[H_t^r=0]=1-p_\mathrm{h}^r, \, r\in \mathcal R,\, t\in\mathcal T, 
\end{equation*} 
where $p_{\mathrm h}^r\in(0,1]$ denotes the transmission success probability over link $r$.

\subsubsection{Receiver Observation and Reconstruction}
Given the transmission decision and the channel state, the observation at receiver $r\in\mathcal{R}$ is
\begin{equation} 
Y_t^r= 
\begin{cases} 
X_t, & A_t H_t^r=1,\\ 
\varepsilon \notin\mathcal{N},  & A_t H_t^r=0, \end{cases} \, t\in\mathcal T.
\label{eq:def_observation}
\end{equation} 
That is, receiver $r$ observes the current source state if and only if Alice transmits and the update is successfully received over link $r$; otherwise, it observes an erasure symbol $\varepsilon$.

Each receiver adopts a \ac{zoh} reconstruction rule, retaining the latest successfully received source state until a new update arrives, particularly for receivers with limited state-inference capability.
Let $\Theta_t^r$ denote the \ac{aoi} at receiver $r$, defined as 
\begin{equation}
\Theta_t^r = t-\max\{\tau\leq t: Y_\tau^r\neq \varepsilon\}, \, r\in\mathcal{R},\, t\in\mathcal{T}. \label{eq:def_aoi_receiver} 
\end{equation}
The reconstructed state at receiver $r$ in slot $t$ is
\begin{equation} 
\hat X_t^r =Y_{t-\Theta_t^r}^r= X_{t-\Theta_t^r}, \, r\in\mathcal{R},\, t\in\mathcal{T}. \label{eq:def_estimator_receiver} 
\end{equation}
Equations \eqref{eq:def_aoi_receiver}-\eqref{eq:def_estimator_receiver} apply after the first successful reception; before then, each receiver retains an arbitrary initial reconstruction in $\mathcal N$.

\subsubsection{Adversary Model}
We consider a passive eavesdropper that observes successfully intercepted source updates without modifying the source process, transmission decisions, or channel conditions. Both Bob and Eve reconstruct the current source state using the \ac{zoh} rule specified above. Our analysis characterizes reconstruction performance under this rule, without assuming its optimality for Eve. Accordingly, security refers to reconstruction failure at Eve under this rule, and \ac{lra} captures its simultaneous occurrence with correct reconstruction at Bob.

\subsection{Legitimate Reconstruction Advantage Metrics}\label{subsec:metric_definition}
To characterize the reconstruction asymmetry between Bob and Eve in terms of probability and persistence, we define the \ac{lra} status and \ac{lra} run length as follows.
\begin{definition}
    The \ac{lra} status is defined as
    \begin{equation*}
    J_\mathrm{LRA}[t]
    \triangleq
    \mathbbm 1\{\hat X_t^\mathrm{b}=X_t,\ \hat X_t^{\mathrm e}\neq X_t\},\ t\in\mathcal T.
\end{equation*}
\end{definition}
Thus, $J_\mathrm{LRA}[t]=1$ indicates that Bob has a reconstruction advantage over Eve in slot $t$.

\begin{definition}
    The \ac{lra} run length is defined as
    \begin{align*}
        D_\mathrm{LRA}[t]\triangleq
        &\max\{d\in[0,t]\cap\mathbb{Z}:\\
        &\qquad J_\mathrm{LRA}[t-l]=1,l\in [0,d-1]\cap\mathbb{Z}\},\ t\in\mathcal T.
    \end{align*}
\end{definition}
Hence, $D_\mathrm{LRA}[t]=0$ when the \ac{lra} event does not hold in slot $t$; otherwise, $D_\mathrm{LRA}[t]$ equals the number of consecutive \ac{lra} slots ending at slot $t$.

Accordingly, the long-term average \ac{lra} probability is
\begin{equation}
    \bar J_\mathrm{LRA}
    \triangleq
    \liminf_{T\to\infty}
    \frac{1}{T}
    \sum_{t=1}^{T}
    \mathbb E[J_\mathrm{LRA}[t]].
    \label{eq:avg_reconstruction_advantage}
\end{equation}
and the long-term average \ac{lra} persistence is
\begin{equation}
    \bar D_\mathrm{LRA}
    \triangleq
    \liminf_{T\to\infty}
    \frac{1}{T}
    \sum_{t=1}^{T}
    \mathbb E[D_\mathrm{LRA}[t]].
    \label{eq:avg_reconstruction_advantage_duration}
\end{equation}

The two metrics capture complementary but distinct aspects of joint accuracy and security.
The \ac{lra} probability measures how frequently Bob has a reconstruction advantage over Eve, whereas the \ac{lra} persistence measures how long the current advantage has persisted, averaged over all slots.
Consequently, optimizing one metric does not generally optimize the other, as established analytically and further demonstrated through numerical results.

\section{General Performance Characterization}\label{sec:general_performance}
In this section, we characterize the general \ac{lra} performance. We first establish the Markovianity and ergodicity of the induced process comprising the source state and the reconstructions at Bob and Eve, and then derive its stationary distribution.
Using this joint-state characterization, we derive the closed-form expressions for the long-term \ac{lra} probability and persistence and establish their rational dependence on the system parameters.

\subsection{System Analysis}\label{subsec:system_analysis}
To capture the coupled evolution of the source and the receiver reconstructions, define the composite state $S_t \triangleq (X_t, \hat{X}^\mathrm{b}_t, \hat{X}^\mathrm{e}_t) \in \mathcal{N}^3$.
The source process, transmission decisions, and wiretap channel-state processes jointly induce the stochastic process $\{S_t\}_{t \in \mathcal{T}}$.
Based on the system model described in Section~\ref{subsec:remote_estimation}, the following result establishes the Markovianity and ergodicity of the induced process $\{S_t\}_{t \in \mathcal{T}}$.

\begin{lemma}[Markovianity]\label{lem:mc_3d}
    The induced process $\{S_t\}_{t \in \mathcal{T}}$ forms a finite-state, time-homogeneous Markov chain on $\mathcal{N}^3$. For $\mathbf{s}=(x,a,b)\in\mathcal{N}^3$ and $\mathbf{s}'=(x',a',b')\in\mathcal{N}^3$, its transition kernel is
    \begin{equation}
        [\mathbf{P}]_{\mathbf{s},\mathbf{s}'}=[\mathbf{Q}]_{x,x'}\Psi(\mathbf{s},\mathbf{s}'),
        \label{eq:kernel}
    \end{equation}
    where
    \begin{align*}
        \Psi(\mathbf{s},\mathbf{s}') &= \lambda_{11}\mathbbm{1}(a'=x',b'=x')+\lambda_{10}\mathbbm{1}(a'=x',b'=b) \nonumber\\
        & \quad +\lambda_{01}\mathbbm{1}(a'=a,b'=x')+ \lambda_{00}\mathbbm{1}(a'=a,b'=b).
    \end{align*}
    The coefficients are
\begin{align*}
    \lambda_{11} 
    &= p_{\mathrm a} p_{\mathrm h}^{\mathrm b} p_{\mathrm h}^{\mathrm e},
    & (\text{Both update}) \\
    \lambda_{10} 
    &= p_{\mathrm a} p_{\mathrm h}^{\mathrm b} (1-p_{\mathrm h}^{\mathrm e}),
    & (\text{Only Bob updates}) \\
    \lambda_{01} 
    &= p_{\mathrm a} (1-p_{\mathrm h}^{\mathrm b}) p_{\mathrm h}^{\mathrm e},
    & (\text{Only Eve updates}) \\
    \lambda_{00} 
    &= p_{\mathrm a}(1-p_{\mathrm h}^{\mathrm b})(1-p_{\mathrm h}^{\mathrm e}) + (1-p_{\mathrm a}).
    & (\text{Neither updates})
\end{align*}
\end{lemma}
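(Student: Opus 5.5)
The plan is to write $S_{t+1}$ as a deterministic function of $S_t$ and fresh randomness that is independent of the past. I will read off the kernel from that representation and then check that it does not depend on $t$.

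\textbf{Update map.} By the \ac{zoh} rule in \eqref{eq:def_observation}--\eqref{eq:def_estimator_receiver}, receiver $r$ either refreshes or holds:
\begin{equation*}
\hat X_{t+1}^r = \begin{cases} X_{t+1}, & A_{t+1}H_{t+1}^r=1,\\ \hat X_t^r, & \text{otherwise},\end{cases}
\end{equation*}
for $r\in\mathcal R$. This holds once each receiver has had a first reception. Before that, the arbitrary initial value is simply held, which is the same recursion. Hence
\begin{equation*}
S_{t+1}=f\bigl(S_t, X_{t+1}, A_{t+1}, H_{t+1}^{\mathrm b}, H_{t+1}^{\mathrm e}\bigr)
\end{equation*}
for a fixed, time-invariant map $f$.

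\textbf{Conditional independence.} Let $\mathcal F_t$ be generated by $X_{1:t}$, $A_{1:t}$ and $H_{1:t}^r$ for $r\in\mathcal R$, together with the initial reconstructions; $S_{1:t}$ is $\mathcal F_t$-measurable. The triple $(A_{t+1},H_{t+1}^{\mathrm b},H_{t+1}^{\mathrm e})$ is independent of $\mathcal F_t$ and of $X_{t+1}$, by the i.i.d.\ and mutual-independence assumptions on the RS policy and the channels. The source is Markov and independent of these processes, so
\begin{equation*}
\Pr[X_{t+1}=x'\mid\mathcal F_t]=[\mathbf Q]_{X_t,x'}.
\end{equation*}
Conditioning on $\mathcal F_t$ therefore gives
\begin{equation*}
\Pr[S_{t+1}=\mathbf s'\mid\mathcal F_t]=\sum_{x'}[\mathbf Q]_{X_t,x'}\sum_{\alpha,\eta_{\mathrm b},\eta_{\mathrm e}}\Pr[A=\alpha]\Pr[H^{\mathrm b}=\eta_{\mathrm b}]\Pr[H^{\mathrm e}=\eta_{\mathrm e}]\,\mathbbm 1\{f(S_t,x',\alpha,\eta_{\mathrm b},\eta_{\mathrm e})=\mathbf s'\}.
\end{equation*}
The right-hand side depends on $\mathcal F_t$ only through $S_t$, so $\{S_t\}$ is a Markov chain. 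The right-hand side also does not depend on $t$, so the chain is time-homogeneous. The state space $\mathcal N^3$ is finite.

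\textbf{Computing the kernel.} Group the eight values of $(\alpha,\eta_{\mathrm b},\eta_{\mathrm e})$ by the pair of update indicators $(\alpha\eta_{\mathrm b},\alpha\eta_{\mathrm e})$.
\begin{itemize}
\item Both receivers update when $(1,1,1)$ occurs, with probability $\lambda_{11}$. Then $(a',b')=(x',x')$.
\item Only Bob updates when $(1,1,0)$ occurs, with probability $\lambda_{10}$. Then $(a',b')=(x',b)$.
\item Only Eve updates when $(1,0,1)$ occurs, with probability $\lambda_{01}$. Then $(a',b')=(a,x')$.
\item Neither updates in the remaining cases: $(1,0,0)$, or $\alpha=0$, which has total probability $1-p_{\mathrm a}$ whatever the channel states. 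Their probabilities add to $\lambda_{00}$, and then $(a',b')=(a,b)$.
\end{itemize}
The four cases are exhaustive and disjoint. Summing over them gives $\Psi(\mathbf s,\mathbf s')$, and multiplying by $[\mathbf Q]_{x,x'}$ gives \eqref{eq:kernel}.

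\textbf{Overlapping indicators.} The indicators in $\Psi$ can overlap, for example when $a=x'$. This causes no double counting: the weights attach to disjoint events of the randomness, not to disjoint target states, so their probabilities simply add.

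\textbf{Main obstacle.} The only delicate step is the conditional-independence argument. One must confirm that the future randomness is independent of the entire past, not merely of $S_t$. This follows directly from the stated independence assumptions.

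The statement asserts Markovianity; ergodicity, which the text mentions, would need a separate argument. If required, I would show that the states $(x,x,x)$ all communicate: with probability $\lambda_{11}>0$ both receivers refresh in one slot, and $\mathbf Q$ is irreducible. I would then show that every state leads into this class. Any state outside the class is then transient.
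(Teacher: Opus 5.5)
Your proposal is correct and follows essentially the same route as the paper: it uses the ZOH refresh-or-hold recursion, the independence of $(A_{t+1},H^{\mathrm b}_{t+1},H^{\mathrm e}_{t+1})$ from the past and from $X_{t+1}$, and conditioning on the next source state to obtain $[\mathbf Q]_{x,x'}\Psi(\mathbf s,\mathbf s')$. Conditioning on the larger filtration $\mathcal F_t$ and explicitly grouping the eight randomness outcomes into the four $\lambda$-events only spell out what the paper's proof states briefly.
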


\begin{IEEEproof}
    See Appendix~\ref{sec:proof_mc_3d}.
\end{IEEEproof}

\begin{lemma}[Ergodicity]\label{lem:ergodicity}
The induced process $\{S_t\}_{t\in\mathcal T}$ has a unique aperiodic recurrent class and admits a unique stationary distribution.
\end{lemma}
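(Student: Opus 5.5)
The plan is to use the standard fact that a finite-state Markov chain with exactly one closed communicating class admits a unique stationary distribution, supported on that class. If the class is also aperiodic, the stationary distribution is the limiting distribution from every initial state. By Lemma~\ref{lem:mc_3d}, $\{S_t\}$ is a finite time-homogeneous chain on $\mathcal N^3$ with kernel \eqref{eq:kernel}. So it suffices to (i) exhibit a state reachable from every state, which forces a single closed class, and (ii) show that this class contains a state with a positive-probability self-loop, or has period one for some other reason.

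For reachability, I would take the target set $\mathcal D\triangleq\{(x,x,x):x\in\mathcal N\}$. From any $\mathbf s=(x,a,b)$, the source moves from $x$ to some $x'$ with $p_{x,x'}>0$, and the ``both update'' event of probability $\lambda_{11}=p_\mathrm a p_\mathrm h^\mathrm b p_\mathrm h^\mathrm e>0$ sends the chain to $(x',x',x')\in\mathcal D$ in one step. Next, fix any $j\in\mathcal N$. The source chain is irreducible, so there is a path $x'=i_0\to i_1\to\cdots\to i_m=j$ with positive transition probabilities. Taking the $\lambda_{11}$ event at every step gives the path $(i_0,i_0,i_0)\to\cdots\to(j,j,j)$ in $\{S_t\}$, with probability at least $\lambda_{11}^m\prod_k p_{i_k,i_{k+1}}>0$. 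Hence every diagonal state is reachable from every state of $\mathcal N^3$. Every state reachable from a fixed diagonal state $(j,j,j)$ can in turn reach it again. Therefore the set $\mathcal C$ of states reachable from $(j,j,j)$ is a closed communicating class, and it contains all of $\mathcal D$. Any other closed class would also have to contain $\mathcal D$, so $\mathcal C$ is the unique recurrent class and all remaining states are transient.

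For aperiodicity, I would avoid assuming $p_{j,j}>0$, since the source need not have self-loops. The argument instead transfers periodicity from the source. The source is ergodic, so it is aperiodic, and $\gcd\{n:[\mathbf Q^n]_{j,j}>0\}=1$. Every source cycle $j\to\cdots\to j$ of length $n$ lifts, through $\lambda_{11}$ transitions, to a cycle $(j,j,j)\to\cdots\to(j,j,j)$ of the same length $n$ in $\{S_t\}$. The return-time set of $(j,j,j)\in\mathcal C$ therefore contains the source's return-time set, so its gcd is $1$ and $\mathcal C$ is aperiodic. Existence and uniqueness of the stationary distribution, supported on $\mathcal C$, then follow.

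The main obstacle is conceptual rather than computational. $\{S_t\}$ is not irreducible on $\mathcal N^3$; for example, a state $(x,a,b)$ with $a$ inconsistent with any possible history may be transient or even unreachable. The statement must therefore be proved through the unique-closed-class argument rather than by claiming irreducibility. Positivity of $p_\mathrm a$, $p_\mathrm h^\mathrm b$ and $p_\mathrm h^\mathrm e$, which gives $\lambda_{11}>0$, is what makes both the lifting and the reachability steps work.
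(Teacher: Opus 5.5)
Your proposal is correct and follows essentially the same route as the paper. Reachability of the diagonal states $(x',x',x')$ from every state via the $\lambda_{11}$ event forces a unique recurrent class, and lifting the source's return times at $(x,x,x)$ gives period one. The only difference is cosmetic: the paper invokes the both-update event only in the final slot of an $n$-step source path, giving $[\mathbf P^n]_{\mathbf s,\mathbf s'}\ge[\mathbf Q^n]_{x,x'}\lambda_{11}$, whereas you apply it at every step.
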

\begin{IEEEproof}
    See Appendix~\ref{sec:proof_ergodicity}.
\end{IEEEproof}

Lemma~\ref{lem:mc_3d} provides a time-homogeneous Markov representation of the coupled source and reconstruction dynamics, while Lemma~\ref{lem:ergodicity} ensures that its long-term behavior is independent of the initial condition.
Consequently, long-term time averages converge to the stationary expectations.
We next derive the stationary distribution induced by the transition kernel in Lemma~\ref{lem:mc_3d}, and then express the long-term \ac{lra} metrics by aggregating the stationary probabilities of the relevant composite states, as in Theorems~\ref{thm:stationary} and \ref{thm:stationary_metrics}, respectively.

For any $\mathbf{s}=(x,a,b)\in\mathcal N^3$, define the stationary distribution as
\begin{equation*}\pi(\mathbf{s}) \triangleq \lim_{t \to \infty} \Pr(X_t=x, \hat{X}^{\mathrm b}_t=a, \hat{X}^{\mathrm e}_t=b).\end{equation*}
The following theorem gives its closed-form expression.

\begin{theorem}[Stationary distribution]\label{thm:stationary}
The stationary distribution $\pi(\mathbf{s})$ of the induced process $\{S_t\}_{t \in \mathcal{T}}$ is
\begin{equation}
    \pi(\mathbf{s}) = \pi_1(\mathbf{s}) + \pi_2(\mathbf{s}) + \pi_3(\mathbf{s}),
    \label{eq:stationary_distribution}
\end{equation}
where
\begin{align*}
    \pi_1 (\mathbf{s}) &= \mathbbm 1(a=b) [\mathbf v]_a \lambda_{11} \left[ (\mathbf{I}_N - \lambda_{00} \mathbf Q)^{-1} \right]_{a,x}, \\
    \pi_2 (\mathbf{s}) &= [\mathbf v]_b \lambda_{10} p_\mathrm{e} \left[ \mathbf{Q} \left(\mathbf{I}_N - (1-p_\mathrm{e}) \mathbf Q\right)^{-1} \right]_{b,a} \nonumber \\
        &\quad \times \left[ (\mathbf{I}_N - \lambda_{00} \mathbf Q)^{-1} \right]_{a,x}, \\
    \pi_3 (\mathbf{s}) &= [\mathbf v]_a \lambda_{01} p_\mathrm{b} \left[ \mathbf{Q} \left(\mathbf{I}_N - (1-p_\mathrm{b})\mathbf{Q}\right)^{-1} \right]_{a,b} \nonumber \\
        &\quad \times \left[ (\mathbf{I}_N - \lambda_{00}\mathbf{Q})^{-1} \right]_{b,x}.
\end{align*}
Here, $p_\mathrm{b} = \lambda_{11} + \lambda_{10}$ and $p_\mathrm{e} = \lambda_{11} + \lambda_{01}$ denote the marginal update probabilities at Bob and Eve, respectively.
\end{theorem}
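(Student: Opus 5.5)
By Lemma~\ref{lem:ergodicity}, the stationary distribution is unique. So I have two options: verify that \eqref{eq:stationary_distribution} satisfies $\pi\mathbf P=\pi$ with unit mass, or derive it constructively. I will derive it constructively, by looking backward in time from a stationary slot $t$. Under the RS policy, the events ``Bob updates in slot $\tau$'' and ``Eve updates in slot $\tau$'' form an i.i.d. sequence of four-valued labels, independent of the source. The labels are $\{11,10,01,00\}$ with probabilities $\lambda_{11},\lambda_{10},\lambda_{01},\lambda_{00}$, as in Lemma~\ref{lem:mc_3d}. With ZOH, $\hat X_t^{\mathrm b}=X_{t-\Theta^{\mathrm b}_t}$ and $\hat X_t^{\mathrm e}=X_{t-\Theta^{\mathrm e}_t}$. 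Let $m=\min(\Theta^{\mathrm b}_t,\Theta^{\mathrm e}_t)$ be the age of the most recent update received by either receiver. I partition according to the label of slot $t-m$.
\begin{itemize}
\item[(i)] Label $11$: both receivers hold $X_{t-m}$, so $a=b$. This gives $\pi_1$.
\item[(ii)] Label $10$: Bob holds $X_{t-m}$, and Eve's last update is strictly older.
\item[(iii)] Label $01$: the symmetric case, giving $\pi_3$.
\end{itemize}

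Case (i). The slots $t-m+1,\dots,t$ all carry label $00$, which has probability $\lambda_{11}\lambda_{00}^m$. Independence of the labels from the stationary source gives $\Pr(X_{t-m}=a)=[\mathbf v]_a$ and $\Pr(X_t=x\mid X_{t-m}=a)=[\mathbf Q^m]_{a,x}$. Summing over $m\ge0$ gives $[\mathbf v]_a\lambda_{11}[(\mathbf I_N-\lambda_{00}\mathbf Q)^{-1}]_{a,x}$. The Neumann series converges because $\lambda_{00}<1$ and $\mathbf Q$ is stochastic.

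Case (ii). Eve's last reception occurs at slot $t-m-k$ for some $k\ge1$. Her reception process is Bernoulli with marginal rate $p_{\mathrm e}$. Reading backward in time, slot $t-m$ is the $10$ event, slots $t-m-1,\dots,t-m-k+1$ have no Eve reception (each with probability $1-p_{\mathrm e}$, and with Bob's state there irrelevant), and slot $t-m-k$ has an Eve reception (probability $p_{\mathrm e}$). The source goes $b\to a$ in $k$ steps and then $a\to x$ in $m$ steps, starting from $[\mathbf v]_b$. Summing gives
\[
[\mathbf v]_b\,\lambda_{10}\,p_{\mathrm e}\sum_{k\ge1}(1-p_{\mathrm e})^{k-1}[\mathbf Q^k]_{b,a}\sum_{m\ge0}\lambda_{00}^m[\mathbf Q^m]_{a,x},
\]
and the first sum equals $[\mathbf Q(\mathbf I_N-(1-p_{\mathrm e})\mathbf Q)^{-1}]_{b,a}$. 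This is exactly $\pi_2$, and case (iii) gives $\pi_3$ by symmetry.

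The labels at distinct slots are independent, so these probabilities factor. Because the three cases are disjoint and exhaustive almost surely, their contributions sum to $\pi$.

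The main obstacle is rigor about the stationary limit and the initial arbitrary reconstructions. As $t\to\infty$, the probability that some receiver has not yet received an update vanishes, since $p_{\mathrm b},p_{\mathrm e}>0$. The law of $X_{t-m-k}$ converges to $\mathbf v$ by ergodicity of the source. A dominated convergence argument, using the geometric tails $\lambda_{00}^m(1-p_{\mathrm e})^{k}$, then justifies exchanging the limit with the sums. A brief check that $\sum_{\mathbf s}\pi(\mathbf s)=\lambda_{11}/(1-\lambda_{00})+\lambda_{10}/(1-\lambda_{00})+\lambda_{01}/(1-\lambda_{00})=1$ confirms consistency. Alternatively, one could substitute into $\pi\mathbf P=\pi$ using \eqref{eq:kernel}, but the backward-look argument avoids that bookkeeping.
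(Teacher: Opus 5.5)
Your proposal is correct and follows essentially the same route as the paper. Partitioning by the label of slot $t-m$ is exactly the paper's split of the age pair $(\Theta^{\mathrm b}_t,\Theta^{\mathrm e}_t)=(i,j)$ into $i=j$, $i<j$, $i>j$; both arguments then multiply the age weights $\lambda_{11}\lambda_{00}^{i}$ and $\lambda_{10}p_{\mathrm e}(1-p_{\mathrm e})^{j-i-1}\lambda_{00}^{i}$ (and the mirror case) by the stationary three-time source probabilities and sum the Neumann series, though your handling of the initial reconstructions, the dominated-convergence limit interchange, and the normalization check is more careful than the paper's.
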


\begin{IEEEproof}
    See Appendix~\ref{sec:proof_thm_stationary}.
\end{IEEEproof}
The three terms correspond to the latest successful receptions occurring at the same time, more recently at Bob, and more recently at Eve, respectively.

\subsection{Metric Analysis}
Using the joint-state characterization above, we now derive the long-term \ac{lra} probability and persistence.

Define the reconstruction-advantage region as
\begin{equation*}
    \mathcal S_{\mathrm{LRA}}
    \triangleq
    \{\mathbf s=(x,a,b)\in\mathcal N^3:a=x,\ b\neq x\}.
\end{equation*}
Let $N_\mathrm{LRA}\triangleq|\mathcal S_{\mathrm{LRA}}|$, and fix an arbitrary ordering of the states in $\mathcal S_{\mathrm{LRA}}$.
Under this ordering, define the restricted stationary vector and transition submatrix as
\begin{equation*}
    \boldsymbol{\pi}_\mathrm{LRA} \triangleq
    [\pi(\mathbf s)]_{\mathbf s\in\mathcal S_\mathrm{LRA}}
    \in\mathbb R^{N_\mathrm{LRA}},
\end{equation*}
and
\begin{equation*}
    \mathbf P_\mathrm{LRA}\triangleq
    [\mathbf P_{\mathbf s,\mathbf s'}]_{\mathbf s,\mathbf s'\in\mathcal S_\mathrm{LRA}}
    \in\mathbb R^{N_\mathrm{LRA}\times N_\mathrm{LRA}}.
\end{equation*}

\begin{theorem}[Stationary \ac{lra} Metrics]
\label{thm:stationary_metrics}
The long-term average \ac{lra} probability is
\begin{equation}
    \bar J_\mathrm{LRA}
    =
    \mathbf 1_{N_\mathrm{LRA}}^{\mathsf T}\boldsymbol{\pi}_\mathrm{LRA}.
    \label{eq:stationary_metric_J_matrix}
\end{equation}
The long-term average \ac{lra} persistence is
\begin{equation}
    \bar D_\mathrm{LRA}
    =
    \mathbf 1_{N_\mathrm{LRA}}^{\mathsf T}
    \left(\mathbf I_{N_\mathrm{LRA}}-\mathbf P_\mathrm{LRA}^{\mathsf T}\right)^{-1}
    \boldsymbol{\pi}_\mathrm{LRA}.
    \label{eq:stationary_metric_D_matrix}
\end{equation}
\end{theorem}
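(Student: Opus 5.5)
We use Lemmas~\ref{lem:mc_3d} and \ref{lem:ergodicity} to turn both long-term averages into stationary expectations of functionals of $\{S_t\}$, and then evaluate those expectations on $\mathcal S_{\mathrm{LRA}}$. The chain is finite, has a unique aperiodic recurrent class, and any transient states are left in finite time almost surely with geometrically decaying probability. Hence $\Pr(S_t=\mathbf s)\to\pi(\mathbf s)$ for every initial condition, and the Ces\`aro averages of bounded functionals of $S_t$ converge, so the $\liminf$ is an actual limit.

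\textbf{LRA probability.} We have $J_{\mathrm{LRA}}[t]=\mathbbm 1\{S_t\in\mathcal S_{\mathrm{LRA}}\}$. Therefore $\mathbb E[J_{\mathrm{LRA}}[t]]=\sum_{\mathbf s\in\mathcal S_{\mathrm{LRA}}}\Pr(S_t=\mathbf s)$, which converges to $\mathbf 1^{\mathsf T}\boldsymbol\pi_{\mathrm{LRA}}$. The Ces\`aro average converges to the same limit, which gives \eqref{eq:stationary_metric_J_matrix}.

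\textbf{LRA persistence.} Write the run length as a sum of indicators:
\begin{equation*}
D_{\mathrm{LRA}}[t]=\sum_{d=1}^{t}\mathbbm 1\{S_{t-l}\in\mathcal S_{\mathrm{LRA}},\ l=0,\dots,d-1\}.
\end{equation*}
For $t$ large and fixed $d$, the Markov property gives
\begin{equation*}
\Pr(S_{t-d+1},\dots,S_t\in\mathcal S_{\mathrm{LRA}})=\mathbf 1^{\mathsf T}\bigl(\mathbf P_{\mathrm{LRA}}^{\mathsf T}\bigr)^{d-1}\boldsymbol\mu_{t-d+1},
\end{equation*}
where $\boldsymbol\mu_\tau$ is the law of $S_\tau$ restricted to $\mathcal S_{\mathrm{LRA}}$. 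Taking the stationary limit and summing over $d$ yields the Neumann series
\begin{equation*}
\sum_{k\ge0}\mathbf 1^{\mathsf T}(\mathbf P_{\mathrm{LRA}}^{\mathsf T})^k\boldsymbol\pi_{\mathrm{LRA}}=\mathbf 1^{\mathsf T}(\mathbf I-\mathbf P_{\mathrm{LRA}}^{\mathsf T})^{-1}\boldsymbol\pi_{\mathrm{LRA}},
\end{equation*}
provided $\rho(\mathbf P_{\mathrm{LRA}})<1$.

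\textbf{Spectral radius bound (main obstacle).} This bound is what makes the series converge and the inverse exist, and we show it directly. From any state in $\mathcal S_{\mathrm{LRA}}$, the event "Alice transmits and both receivers succeed" has probability $\lambda_{11}>0$ in each slot, because $p_{\mathrm a},p_{\mathrm h}^{\mathrm b},p_{\mathrm h}^{\mathrm e}>0$. After such a slot the composite state is $(x',x',x')\notin\mathcal S_{\mathrm{LRA}}$. So every row sum of $\mathbf P_{\mathrm{LRA}}$ is at most $1-\lambda_{11}$. This follows from \eqref{eq:kernel}: the $\lambda_{11}$ term places all of its mass on $a'=b'$, which lies outside the region. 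Hence $\|\mathbf P_{\mathrm{LRA}}\|_\infty\le1-\lambda_{11}<1$.

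\textbf{Exchanging limits and sums.} The same bound gives a uniform geometric tail: $\Pr(D_{\mathrm{LRA}}[t]\ge d)\le(1-\lambda_{11})^{d-1}$ for all $t$. This bound is independent of $t$ and summable in $d$. Dominated convergence then lets us take $t\to\infty$ inside the sum over $d$, so $\mathbb E[D_{\mathrm{LRA}}[t]]$ converges to the stated limit. The Ces\`aro average converges to the same limit, and since that limit is finite, $\bar D_{\mathrm{LRA}}$ equals \eqref{eq:stationary_metric_D_matrix}.

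Boundary effects from truncating the run at $d\le t$ and from the arbitrary initial reconstruction vanish in the limit by the same tail bound.
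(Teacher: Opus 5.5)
Your proof is correct and follows essentially the same route as the paper. Both write $D_{\mathrm{LRA}}[t]$ as a sum of run indicators, expand each term along Markov paths, justify exchanging the limit and the sum with a uniform geometric tail bound, and obtain the Neumann series from a row-sum bound on $\mathbf P_{\mathrm{LRA}}$. The paper uses the slightly tighter constant $1-p_{\mathrm e}$, since any reception at Eve (via $\lambda_{11}$ or $\lambda_{01}$) ends the run, whereas your $1-\lambda_{11}$ works equally well because $\lambda_{11}>0$.
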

\begin{IEEEproof}
    See Appendix~\ref{sec:proof_stationary_metrics}.
\end{IEEEproof}

Theorem~\ref{thm:stationary_metrics} reduces the evaluation of the \ac{lra} metrics to the restricted stationary vector and transition submatrix. We next characterize how these quantities, and hence the \ac{lra} metrics, depend on the transmission and channel-success probabilities.

\begin{corollary}[Rational Dependence on System Parameters]
\label{cor:rational_structure}
The long-term average \ac{lra} probability $\bar J_{\mathrm{LRA}}$ and persistence $\bar D_{\mathrm{LRA}}$ are rational functions of the transmission probability $p_{\mathrm a}$ and the channel-success probabilities $p_{\mathrm h}^{\mathrm b}$ and $p_{\mathrm h}^{\mathrm e}$.
\end{corollary}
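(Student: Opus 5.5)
We will read the rational structure directly off the closed-form expressions in Theorems~\ref{thm:stationary} and~\ref{thm:stationary_metrics}. The basic tool is that rational functions of $(p_{\mathrm a},p_{\mathrm h}^{\mathrm b},p_{\mathrm h}^{\mathrm e})$ form a field, closed under sums, products and quotients by nonzero elements. In addition, Cramer's rule shows that every entry of $\mathbf M^{-1}$ is a ratio of polynomials in the entries of $\mathbf M$ whenever $\det\mathbf M\neq 0$.

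\textbf{Polynomial coefficients.} The coefficients $\lambda_{11},\lambda_{10},\lambda_{01},\lambda_{00}$ of Lemma~\ref{lem:mc_3d} are polynomials in the three parameters. Hence so are $p_\mathrm b=p_{\mathrm a}p_{\mathrm h}^{\mathrm b}$ and $p_\mathrm e=p_{\mathrm a}p_{\mathrm h}^{\mathrm e}$. The matrix $\mathbf Q$ and the vector $\mathbf v$ do not depend on the parameters.

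\textbf{The three inverses in Theorem~\ref{thm:stationary}.} These are $(\mathbf I_N-\lambda_{00}\mathbf Q)^{-1}$, $(\mathbf I_N-(1-p_\mathrm e)\mathbf Q)^{-1}$ and $(\mathbf I_N-(1-p_\mathrm b)\mathbf Q)^{-1}$, all of the form $(\mathbf I_N-c\mathbf Q)^{-1}$. The constant $c$ is a polynomial in the parameters with $c\in[0,1)$.
\begin{itemize}
\item For $c=\lambda_{00}$, we have $1-\lambda_{00}=p_{\mathrm a}(1-(1-p_{\mathrm h}^{\mathrm b})(1-p_{\mathrm h}^{\mathrm e}))>0$, since $p_{\mathrm a},p_{\mathrm h}^r\in(0,1]$.
\item For the other two, $1-p_\mathrm b$ and $1-p_\mathrm e$ lie in $[0,1)$ for the same reason.
\end{itemize}
Because $\mathbf Q$ is stochastic, its spectral radius is $1$, so $\mathbf I_N-c\mathbf Q$ is invertible when $|c|<1$. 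Equivalently, the Neumann series converges. Thus each determinant is a polynomial in the parameters that does not vanish on the parameter domain. By Cramer's rule every entry of these inverses is rational, and so are the products with $\mathbf Q$.

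\textbf{Rationality of $\pi$ and $\bar J_{\mathrm{LRA}}$.} Each of $\pi_1,\pi_2,\pi_3$ is a product of constants, the $\lambda$'s, $p_\mathrm b$ or $p_\mathrm e$, and entries of the above matrices. Each is therefore rational, and so is $\pi(\mathbf s)$ for every $\mathbf s$. It follows that $\boldsymbol\pi_{\mathrm{LRA}}$ is a rational vector. Then $\bar J_{\mathrm{LRA}}=\mathbf 1^{\mathsf T}\boldsymbol\pi_{\mathrm{LRA}}$ is a finite sum of rational functions and is rational.

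\textbf{Rationality of $\bar D_{\mathrm{LRA}}$.} By \eqref{eq:kernel}, $\mathbf P_{\mathrm{LRA}}$ has entries $[\mathbf Q]_{x,x'}\Psi(\mathbf s,\mathbf s')$, each a linear combination of the $\lambda$'s with constant coefficients. These entries are polynomials, so $\det(\mathbf I-\mathbf P_{\mathrm{LRA}}^{\mathsf T})$ is a polynomial. Cramer's rule makes $(\mathbf I-\mathbf P_{\mathrm{LRA}}^{\mathsf T})^{-1}$ rational, and its product with the rational vector $\boldsymbol\pi_{\mathrm{LRA}}$ gives a rational $\bar D_{\mathrm{LRA}}$.

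\textbf{Main obstacle.} The delicate point is to show that $\det(\mathbf I-\mathbf P_{\mathrm{LRA}})$ is nonzero on the domain. This amounts to showing that $\mathbf P_{\mathrm{LRA}}$, the substochastic restriction of $\mathbf P$ to $\mathcal S_{\mathrm{LRA}}$, has spectral radius strictly less than $1$.

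\begin{itemize}
\item \emph{First approach (escape from $\mathcal S_{\mathrm{LRA}}$).} It suffices that from every state of $\mathcal S_{\mathrm{LRA}}$ the chain leaves the set with positive probability within finitely many steps. A ``both update'' event occurs with probability $\lambda_{11}>0$ and moves the chain to a state with $a'=b'=x'$, which lies outside $\mathcal S_{\mathrm{LRA}}$. Hence every row of $\mathbf P_{\mathrm{LRA}}$ sums to at most $1-\lambda_{11}<1$, so $\|\mathbf P_{\mathrm{LRA}}\|_\infty<1$ and the spectral radius is below $1$.
\item \emph{Fallback if the first approach fails.} It also follows from Theorem~\ref{thm:stationary_metrics} itself, which presupposes the inverse exists.
\end{itemize}

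With the determinant nonvanishing, $\bar D_{\mathrm{LRA}}$ is a genuine rational function on the whole parameter domain, which completes the argument.
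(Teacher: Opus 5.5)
Your proposal is correct and follows the same route as the paper: the $\lambda_{ij}$ and the entries of $\mathbf P_{\mathrm{LRA}}$ are polynomials in $(p_{\mathrm a},p_{\mathrm h}^{\mathrm b},p_{\mathrm h}^{\mathrm e})$, so the inverses in Theorems~\ref{thm:stationary} and~\ref{thm:stationary_metrics} have rational entries, and hence so do $\bar J_{\mathrm{LRA}}$ and $\bar D_{\mathrm{LRA}}$. You also check explicitly that the relevant determinants never vanish on the parameter domain, via $|c|<1$ for $\mathbf I_N-c\mathbf Q$ and the row-sum bound $1-\lambda_{11}$ for $\mathbf P_{\mathrm{LRA}}$; the paper leaves this step implicit in its proof of the corollary and covers it in Appendix~\ref{sec:proof_stationary_metrics} with the sharper bound $1-p_\mathrm{e}$.
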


\begin{IEEEproof}
The update probabilities $\lambda_{ij}$ and the entries of the restricted transition matrix $\mathbf P_\mathrm{LRA}$ are polynomial functions of $p_{\mathrm a}$, $p_{\mathrm h}^{\mathrm b}$, and $p_{\mathrm h}^{\mathrm e}$.
The entries of the inverse matrices appearing in Theorems~\ref{thm:stationary} and \ref{thm:stationary_metrics} are therefore rational functions of these parameters.
Hence, $\bar J_{\mathrm{LRA}}$ and $\bar D_{\mathrm{LRA}}$ are rational functions of the system parameters.
\end{IEEEproof}
Together, Theorem~\ref{thm:stationary_metrics} and Corollary~\ref{cor:rational_structure} provide computable \ac{lra} metrics for general finite-state Markov sources and establish a basis for parameter optimization.

\section{Structured-Source Specializations}\label{sec:sepcialization}
In this section, we specialize the general matrix-form results in Theorems~\ref{thm:stationary} and \ref{thm:stationary_metrics} to two-state and uniform-transition Markov sources, obtaining explicit algebraic expressions for the \ac{lra} metrics in Corollaries~\ref{cor:lra_probability_TS}-\ref{cor:lra_persistence_ut}. These expressions clarify the effects of source dynamics and transmission decisions and enable the rare-transmission analysis in Theorem~\ref{thm:rare_transmission_limit}.

\subsection{Algebraic Metric Expressions}
\subsubsection{Two-State Markov Source}
First, we consider a two-state Markov source with transition matrix
\begin{equation*}
    \mathbf Q^\mathrm{TS}=\begin{bmatrix}
        1-p &p\\
        q   &1-q
    \end{bmatrix}.
\end{equation*}
Its state space is $\mathcal N^{\mathrm{TS}}=\{1,2\}$ and stationary distribution is $\mathbf v^\mathrm{TS} =[\frac{q}{p+q},\frac{p}{p+q}]^{\mathsf{T}}$.

\begin{corollary}\label{cor:lra_probability_TS}
For the two-state source $\mathbf Q^\mathrm{TS}$, the long-term average \ac{lra} probability is
\begin{equation*}
\bar{J}^\mathrm{TS}_\mathrm{LRA}
= \frac{\beta_1 p_\mathrm{a}+\beta_2} {\beta_3 p_\mathrm{a}^{2}+\beta_4 p_\mathrm{a}+\beta_5},\label{eq:opt_p}
\end{equation*}
where
\begin{align*}
\beta_1
&=
pq\left[
    \bigl(p_{\mathrm h}^{\mathrm b}\bigr)^2
    \bigl(1-p_{\mathrm h}^{\mathrm e}\bigr)(p+q-2)
    +
    \bigl(p_{\mathrm h}^{\mathrm e}\bigr)^2
    \bigl(1-p_{\mathrm h}^{\mathrm b}\bigr)(p+q)
\right],
\\
\beta_2
&=
pq(p+q)
\left(
    2p_{\mathrm h}^{\mathrm b}p_{\mathrm h}^{\mathrm e}
    -p_{\mathrm h}^{\mathrm b}
    -p_{\mathrm h}^{\mathrm e}
\right),
\\
\beta_3
&=
\left(
    p_{\mathrm h}^{\mathrm b}p_{\mathrm h}^{\mathrm e}
    -p_{\mathrm h}^{\mathrm b}
    -p_{\mathrm h}^{\mathrm e}
\right)
p_{\mathrm h}^{\mathrm b}p_{\mathrm h}^{\mathrm e}
(p+q-1)^2(p+q),
\\
\beta_4
&=
\left(
    p_{\mathrm h}^{\mathrm b}p_{\mathrm h}^{\mathrm e}
    -p_{\mathrm h}^{\mathrm b}
    -p_{\mathrm h}^{\mathrm e}
\right)
\left(
    p_{\mathrm h}^{\mathrm b}
    +p_{\mathrm h}^{\mathrm e}
\right)
(1-p-q)(p+q)^2,
\\
\beta_5
&=
\left(
    p_{\mathrm h}^{\mathrm b}p_{\mathrm h}^{\mathrm e}
    -p_{\mathrm h}^{\mathrm b}
    -p_{\mathrm h}^{\mathrm e}
\right)
(p+q)^3 .
\end{align*}
\end{corollary}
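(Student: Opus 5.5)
The plan is to specialize Theorem~\ref{thm:stationary_metrics}, in the form $\bar J_{\mathrm{LRA}}=\sum_{\mathbf s\in\mathcal S_{\mathrm{LRA}}}\pi(\mathbf s)$, to $N=2$. There $\mathcal S_{\mathrm{LRA}}=\{(1,1,2),(2,2,1)\}$. Both states have $a\neq b$, so $\pi_1$ vanishes and only $\pi_2$ and $\pi_3$ from Theorem~\ref{thm:stationary} contribute. The needed entries follow from the spectral decomposition of $\mathbf Q^{\mathrm{TS}}$:
\begin{equation*}
\mathbf Q^{\mathrm{TS}}=\mathbf 1\mathbf v^{\mathsf T}+\mu\,\mathbf E,\qquad \mu\triangleq 1-p-q,\qquad \mathbf E\triangleq\mathbf I_2-\mathbf 1\mathbf v^{\mathsf T}.
\end{equation*}
Here $\mathbf E$ is idempotent and orthogonal to $\mathbf 1\mathbf v^{\mathsf T}$. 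For any scalar $c\in(0,1)$ this gives
\begin{equation*}
(\mathbf I_2-c\mathbf Q)^{-1}=\frac{1}{1-c}\mathbf 1\mathbf v^{\mathsf T}+\frac{1}{1-c\mu}\mathbf E,
\end{equation*}
and $\mathbf Q(\mathbf I_2-c\mathbf Q)^{-1}$ has the same form with coefficients $\frac{1}{1-c}$ and $\frac{\mu}{1-c\mu}$.

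Next, I would read off the entries. Since $[\mathbf 1\mathbf v^{\mathsf T}]_{i,j}=v_j$, the off-diagonal entries are $[\mathbf E]_{i,j}=-v_j$ for $i\neq j$, and the diagonal entries are $[\mathbf E]_{i,i}=1-v_i$. For a state $(x,x,y)$ with $y\neq x$:
\begin{itemize}
\item the $\pi_2$ term is $v_y\lambda_{10}p_{\mathrm e}\,[\mathbf Q(\mathbf I-(1-p_{\mathrm e})\mathbf Q)^{-1}]_{y,x}\,[(\mathbf I-\lambda_{00}\mathbf Q)^{-1}]_{x,x}$, which involves one off-diagonal and one diagonal entry;
\item the $\pi_3$ term is $v_x\lambda_{01}p_{\mathrm b}\,[\mathbf Q(\mathbf I-(1-p_{\mathrm b})\mathbf Q)^{-1}]_{x,y}\,[(\mathbf I-\lambda_{00}\mathbf Q)^{-1}]_{y,x}$, which involves two off-diagonal entries.
\end{itemize}
The off-diagonal entry of $\mathbf Q(\mathbf I-c\mathbf Q)^{-1}$ is $v_j\bigl(\frac{1}{1-c}-\frac{\mu}{1-c\mu}\bigr)=\frac{v_j(1-\mu)}{(1-c)(1-c\mu)}$. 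With $c=1-p_{\mathrm e}$ the prefactor $p_{\mathrm e}$ cancels $1-c$, leaving $\frac{v_j(p+q)}{1-(1-p_{\mathrm e})\mu}$; the same cancellation happens for Bob. Summing the two states and using $v_1v_2=pq/(p+q)^2$ and $v_1+v_2=1$ symmetrizes the expression, so $\pi_3$ contributes a term proportional to $v_1v_2$. The $\pi_2$ term contributes $v_1v_2$ times the diagonal entries $\frac{1}{1-\lambda_{00}}v_x+\frac{1-v_x}{1-\lambda_{00}\mu}$, summed appropriately over $x$.

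Then I would substitute $1-\lambda_{00}=p_{\mathrm a}(p_{\mathrm h}^{\mathrm b}+p_{\mathrm h}^{\mathrm e}-p_{\mathrm h}^{\mathrm b}p_{\mathrm h}^{\mathrm e})$, $p_{\mathrm b}=p_{\mathrm a}p_{\mathrm h}^{\mathrm b}$, $p_{\mathrm e}=p_{\mathrm a}p_{\mathrm h}^{\mathrm e}$, $\lambda_{10}=p_{\mathrm a}p_{\mathrm h}^{\mathrm b}(1-p_{\mathrm h}^{\mathrm e})$ and $\lambda_{01}=p_{\mathrm a}(1-p_{\mathrm h}^{\mathrm b})p_{\mathrm h}^{\mathrm e}$. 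The factor $1-\lambda_{00}$ in the denominator cancels against the overall $p_{\mathrm a}$ in $\lambda_{10}$ and $\lambda_{01}$. The remaining denominators $1-(1-p_{\mathrm e})\mu$, $1-(1-p_{\mathrm b})\mu$ and $1-\lambda_{00}\mu$ are each affine in $p_{\mathrm a}$.

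The main obstacle is the final algebra: reducing three affine denominators to the stated quadratic $\beta_3p_{\mathrm a}^2+\beta_4p_{\mathrm a}+\beta_5$. This requires a cancellation, probably through the identity that $1-\lambda_{00}\mu$ combines with the Bob and Eve denominators, which may come from a partial-fraction relation since $1-\lambda_{00}=p_{\mathrm b}+p_{\mathrm e}-\lambda_{11}$. I would first check that the product $(1-(1-p_{\mathrm b})\mu)(1-(1-p_{\mathrm e})\mu)$ matches the stated denominator up to the common factor $(p_{\mathrm h}^{\mathrm b}p_{\mathrm h}^{\mathrm e}-p_{\mathrm h}^{\mathrm b}-p_{\mathrm h}^{\mathrm e})(p+q)$. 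Its $p_{\mathrm a}^2$ coefficient $\mu^2p_{\mathrm h}^{\mathrm b}p_{\mathrm h}^{\mathrm e}$ agrees with $\beta_3$, which suggests that the $1-\lambda_{00}\mu$ factor cancels with the numerator. I would verify this cancellation symbolically, then collect the numerator coefficients into $\beta_1$ and $\beta_2$, and finally sanity-check the result at $p_{\mathrm a}=1$ and $p_{\mathrm h}^{\mathrm e}\to0$.
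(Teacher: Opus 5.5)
Your proposal is correct and follows the paper's route: the paper proves the corollary by substituting Theorem~\ref{thm:stationary} with $\mathbf Q=\mathbf Q^{\mathrm{TS}}$ into \eqref{eq:stationary_metric_J_matrix}, and your decomposition $\mathbf Q^{\mathrm{TS}}=\mathbf 1\mathbf v^{\mathsf T}+\mu\mathbf E$ is a clean way to carry out that substitution. The cancellation you anticipate does hold: with $u=p_{\mathrm h}^{\mathrm b}+p_{\mathrm h}^{\mathrm e}-p_{\mathrm h}^{\mathrm b}p_{\mathrm h}^{\mathrm e}$, the cross-multiplied numerator is divisible by $1-\lambda_{00}\mu=(p+q)+p_{\mathrm a}u\mu$ (it vanishes at $p_{\mathrm a}=-(p+q)/(u\mu)$ when $\mu\neq0$), and after dividing this factor out the numerator becomes $-(\beta_1p_{\mathrm a}+\beta_2)$ and the denominator becomes $u(p+q)\bigl(1-(1-p_{\mathrm b})\mu\bigr)\bigl(1-(1-p_{\mathrm e})\mu\bigr)=-(\beta_3p_{\mathrm a}^2+\beta_4p_{\mathrm a}+\beta_5)$, as claimed.
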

\begin{IEEEproof}
    Substituting \eqref{eq:stationary_distribution} with $\mathbf Q=\mathbf Q^\mathrm{TS}$ into \eqref{eq:stationary_metric_J_matrix} gives the result.
\end{IEEEproof}

\begin{corollary} \label{cor:lra_persistence_TS}
For the two-state source $\mathbf Q^\mathrm{TS}$, the long-term average \ac{lra} persistence is
\begin{equation}
\begin{aligned}
    \bar D^{\mathrm{TS}}_\mathrm{LRA}
    =
    &\frac{1}{2\left(\beta_3p_{\mathrm a}^2+\beta_4p_{\mathrm a}+\beta_5\right)}\\
    &\times\left[
    \frac{\beta_1p_{\mathrm a}+\beta_2+\beta_7}
    {p+p_{\mathrm a}p_{\mathrm h}^{\mathrm e}(1-p)}
    +
    \frac{\beta_1p_{\mathrm a}+\beta_2-\beta_7}
    {q+p_{\mathrm a}p_{\mathrm h}^{\mathrm e}(1-q)}
    \right],
\end{aligned}
    \label{eq:DLRA_TS_closed_form}
\end{equation}
where $\beta_7$ is defined in \eqref{eq:def_beta7}.

\begin{figure*}
\begin{equation}
\beta_7
=
\frac{
pq(p-q)
\left[
1-p_{\mathrm a}
\left(
p_{\mathrm h}^{\mathrm b}
+p_{\mathrm h}^{\mathrm e}
-p_{\mathrm h}^{\mathrm b}p_{\mathrm h}^{\mathrm e}
\right)
\right]
\left\{
-(p+q)
\left(
2p_{\mathrm h}^{\mathrm b}p_{\mathrm h}^{\mathrm e}
-p_{\mathrm h}^{\mathrm b}
-p_{\mathrm h}^{\mathrm e}
\right)
+
p_{\mathrm a}(1-p-q)
\left[
\left(p_{\mathrm h}^{\mathrm b}\right)^2(1-p_{\mathrm h}^{\mathrm e})
+
\left(p_{\mathrm h}^{\mathrm e}\right)^2(1-p_{\mathrm h}^{\mathrm b})
\right]
\right\}
}{
(p+q)
+
p_{\mathrm a}(1-p-q)
\left(
p_{\mathrm h}^{\mathrm b}
+p_{\mathrm h}^{\mathrm e}
-p_{\mathrm h}^{\mathrm b}p_{\mathrm h}^{\mathrm e}
\right)
}.
\label{eq:def_beta7}
\end{equation}
\end{figure*}
\end{corollary}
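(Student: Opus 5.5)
For the two-state source $\mathbf Q^\mathrm{TS}$, the advantage region $\mathcal S_\mathrm{LRA}$ has only two states, $\mathbf s_1=(1,1,2)$ and $\mathbf s_2=(2,2,1)$. So $\boldsymbol\pi_\mathrm{LRA}=[\pi(\mathbf s_1),\pi(\mathbf s_2)]^{\mathsf T}$, and $\mathbf P_\mathrm{LRA}$ is $2\times 2$. The plan is to write these objects explicitly and insert them into \eqref{eq:stationary_metric_D_matrix} of Theorem~\ref{thm:stationary_metrics}.

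\emph{Step 1: the restricted transition matrix.} By Lemma~\ref{lem:mc_3d}, for $\mathbf s,\mathbf s'\in\mathcal S_\mathrm{LRA}$ we have $\Psi(\mathbf s,\mathbf s')$ with $a'=x'$ and $b'\neq x'$. The term $\lambda_{11}$ never contributes, since it forces $b'=x'$.
\begin{itemize}
\item From $\mathbf s_1$ to itself ($x'=1$), the events ``only Bob updates'' ($\lambda_{10}$, keeping $b=2$) and ``neither updates'' ($\lambda_{00}$) contribute. This gives $[\mathbf P]_{\mathbf s_1,\mathbf s_1}=(1-p)(\lambda_{10}+\lambda_{00})=(1-p)(1-p_\mathrm e)$.
\item From $\mathbf s_1$ to $\mathbf s_2$, we need $a'=2$ and $b'=1$. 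This is impossible with $b=2$ unless Eve updates to $x'=2$, which would violate $b'\neq x'$. Hence the entry is $0$, and symmetrically $[\mathbf P]_{\mathbf s_2,\mathbf s_1}=0$.
\end{itemize}
So $\mathbf P_\mathrm{LRA}=\mathrm{diag}\big((1-p)(1-p_\mathrm e),(1-q)(1-p_\mathrm e)\big)$, where $p_\mathrm e=p_\mathrm a p_\mathrm h^\mathrm e$. Then \eqref{eq:stationary_metric_D_matrix} reduces to
\begin{equation*}
\bar D^\mathrm{TS}_\mathrm{LRA}=\frac{\pi(\mathbf s_1)}{p+p_\mathrm a p_\mathrm h^\mathrm e(1-p)}+\frac{\pi(\mathbf s_2)}{q+p_\mathrm a p_\mathrm h^\mathrm e(1-q)},
\end{equation*}
because $1-(1-p)(1-p_\mathrm e)=p+p_\mathrm e(1-p)$. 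This matches the two denominators in \eqref{eq:DLRA_TS_closed_form}.

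\emph{Step 2: the individual stationary probabilities.} Corollary~\ref{cor:lra_probability_TS} already gives the sum:
\begin{equation*}
\pi(\mathbf s_1)+\pi(\mathbf s_2)=\bar J^\mathrm{TS}_\mathrm{LRA}=\frac{\beta_1p_\mathrm a+\beta_2}{\beta_3p_\mathrm a^2+\beta_4p_\mathrm a+\beta_5}.
\end{equation*}
It therefore suffices to show
\begin{equation*}
\pi(\mathbf s_1)-\pi(\mathbf s_2)=\frac{\beta_7}{\beta_3p_\mathrm a^2+\beta_4p_\mathrm a+\beta_5}.
\end{equation*}
Then $\pi(\mathbf s_{1,2})=(\beta_1p_\mathrm a+\beta_2\pm\beta_7)/[2(\beta_3p_\mathrm a^2+\beta_4p_\mathrm a+\beta_5)]$, which gives \eqref{eq:DLRA_TS_closed_form}.

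To compute these probabilities, use Theorem~\ref{thm:stationary}. Since $a=x\neq b$, the term $\pi_1$ vanishes. The remaining terms need the resolvents $(\mathbf I-c\mathbf Q^\mathrm{TS})^{-1}$ for $c\in\{\lambda_{00},1-p_\mathrm e,1-p_\mathrm b\}$. For a two-state chain these have the spectral form
\begin{equation*}
(\mathbf I-c\mathbf Q)^{-1}=\frac{1}{1-c}\mathbf 1\mathbf v^{\mathsf T}+\frac{1}{1-c(1-p-q)}\big(\mathbf I-\mathbf 1\mathbf v^{\mathsf T}\big),
\end{equation*}
using the eigenvalues $1$ and $1-p-q$. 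Each needed entry is then a simple combination of $p/(p+q)$ and $q/(p+q)$ with these two scalar factors. Substituting yields $\pi_2(\mathbf s_i)+\pi_3(\mathbf s_i)$ as explicit rational functions.

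\emph{Main obstacle.} The hard part is the algebraic simplification of $\pi(\mathbf s_1)-\pi(\mathbf s_2)$ into the stated form $\beta_7/(\beta_3p_\mathrm a^2+\beta_4p_\mathrm a+\beta_5)$. The difference is antisymmetric under $p\leftrightarrow q$, which accounts for the factor $(p-q)$. I would then try to confirm the remaining factors:
\begin{itemize}
\item $1-p_\mathrm a(p_\mathrm h^\mathrm b+p_\mathrm h^\mathrm e-p_\mathrm h^\mathrm bp_\mathrm h^\mathrm e)=\lambda_{00}$, arising from the $\lambda_{00}$-resolvent cross terms;
\item the denominator $(p+q)+p_\mathrm a(1-p-q)(p_\mathrm h^\mathrm b+p_\mathrm h^\mathrm e-p_\mathrm h^\mathrm bp_\mathrm h^\mathrm e)=1-\lambda_{00}(1-p-q)$, which is exactly the second resolvent denominator.
\end{itemize}
With these factored out, the leftover braces should match the numerator structure of $\beta_1p_\mathrm a+\beta_2$ after dividing by $pq$. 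I would verify the final identity symbolically, clearing denominators and comparing polynomial coefficients in $p_\mathrm a$, to guard against sign errors. I would also cross-check the resulting $\bar D^\mathrm{TS}_\mathrm{LRA}$ numerically against \eqref{eq:stationary_metric_D_matrix} evaluated directly.
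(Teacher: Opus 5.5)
Your proposal is correct and follows the paper's own proof. The paper likewise shows that $\mathbf P_{\mathrm{LRA}}^{\mathrm{TS}}$ is diagonal with entries $(1-p)(1-p_{\mathrm a}p_{\mathrm h}^{\mathrm e})$ and $(1-q)(1-p_{\mathrm a}p_{\mathrm h}^{\mathrm e})$, reduces \eqref{eq:stationary_metric_D_matrix} to the weighted sum of $\pi(1,1,2)$ and $\pi(2,2,1)$, and substitutes Theorem~\ref{thm:stationary}; your sum/difference split via Corollary~\ref{cor:lra_probability_TS} just organizes that substitution. The identity you flag as the main obstacle does hold: the two-state resolvents give
$\pi(1,1,2)-\pi(2,2,1)=-\frac{pq(p-q)\lambda_{00}}{(p+q)(1-\lambda_{00})(1-\lambda_{00}(1-p-q))}\bigl[\frac{\lambda_{10}}{1-(1-p_{\mathrm e})(1-p-q)}+\frac{\lambda_{01}}{1-(1-p_{\mathrm b})(1-p-q)}\bigr]$, and the denominator factors as $\beta_3p_{\mathrm a}^2+\beta_4p_{\mathrm a}+\beta_5=(p_{\mathrm h}^{\mathrm b}p_{\mathrm h}^{\mathrm e}-p_{\mathrm h}^{\mathrm b}-p_{\mathrm h}^{\mathrm e})(p+q)[1-(1-p_{\mathrm e})(1-p-q)][1-(1-p_{\mathrm b})(1-p-q)]$, with $1-\lambda_{00}=p_{\mathrm a}(p_{\mathrm h}^{\mathrm b}+p_{\mathrm h}^{\mathrm e}-p_{\mathrm h}^{\mathrm b}p_{\mathrm h}^{\mathrm e})$ absorbing the factor $p_{\mathrm a}$ in $\lambda_{10},\lambda_{01}$. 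Consequently the brace in $\beta_7$ equals $[\lambda_{10}(1-(1-p_{\mathrm b})(1-p-q))+\lambda_{01}(1-(1-p_{\mathrm e})(1-p-q))]/p_{\mathrm a}$, not a rescaled copy of $\beta_1p_{\mathrm a}+\beta_2$ as you guessed.
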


\begin{IEEEproof}
The \ac{lra} region for the two-state source $\mathbf Q^\mathrm{TS}$ is
$
    \mathcal S_{\mathrm{LRA}}^{\mathrm{TS}}
    =
    \{(1,1,2),(2,2,1)\}.
$
Under the ordering $(1,1,2)$ and $(2,2,1)$, the corresponding restricted transition matrix is
\begin{equation*}
    \mathbf P_{\mathrm{LRA}}^{\mathrm{TS}}
    =
    \begin{bmatrix}
        (1-p)(1-p_{\mathrm a}p_{\mathrm h}^{\mathrm e}) & 0\\
        0 & (1-q)(1-p_{\mathrm a}p_{\mathrm h}^{\mathrm e})
    \end{bmatrix}.
\end{equation*}
Therefore, by \eqref{eq:stationary_metric_D_matrix},
\begin{equation}
    \bar D^{\mathrm{TS}}_\mathrm{LRA}
    =
    \frac{\pi(1,1,2)}
    {p+p_{\mathrm a}p_{\mathrm h}^{\mathrm e}(1-p)}
    +
    \frac{\pi(2,2,1)}
    {q+p_{\mathrm a}p_{\mathrm h}^{\mathrm e}(1-q)} .
    \label{eq:DLRA_TS_proof}
\end{equation}
Substituting \eqref{eq:stationary_distribution} with $\mathbf Q=\mathbf Q^\mathrm{TS}$ into \eqref{eq:DLRA_TS_proof} gives the result.
\end{IEEEproof}

\subsubsection{Uniform-Transition Markov Source}
We then consider the uniform-transition Markov source with state space $\mathcal{N}^\mathrm{UT}=\{1,2,\ldots, N\}$. The source remains in its current state with probability $\rho$ and transitions to each of the other $N-1$ states with probability $\eta$. The transition matrix is given by
\begin{equation}
    [\mathbf Q^{\mathrm{UT}}]_{i,j}
    =
    \begin{cases}
    \rho, & i=j,\\
    \eta, & i\neq j,
    \end{cases}
    \qquad i,j\in\mathcal N,
    \label{eq:uniform_transition_Q}
\end{equation}
where $\rho+(N-1)\eta=1$ with $\eta,\rho\in(0,1)$. 
Owing to its state symmetry, the stationary distribution is uniform, \emph{i.e.}, $v^{\mathrm{UT}}_i=\frac{1}{N}$, for all $i\in\mathcal{N}^\mathrm{UT}$.

\begin{corollary}\label{cor:lra_probability_ut}
    For the {uniform-transition source} $\mathbf Q^{\mathrm{UT}}$, the long-term  average \ac{lra} probability is
    \begin{equation}
        \bar J_{\mathrm{LRA}}^{\mathrm UT}=
        \frac{(N-1)\eta\lambda_{10}\psi_{\mathrm d}}{1-(1-p_\mathrm{e})d}
        +
        \frac{(N-1)\eta\lambda_{01}\psi_{\mathrm o}}{1-(1-p_\mathrm{b})d},
    \label{eq:uniform_avg_cra_compact}
    \end{equation}
    where $d=\rho-\eta$ and
    \begin{equation*}
        \psi_\mathrm{d} = \frac{1+\lambda_{00}(\eta-1)}{(1-\lambda_{00}d)(1-\lambda_{00})},\,
        \psi_\mathrm{o}  = \frac{\lambda_{00}\eta}{(1-\lambda_{00}d)(1-\lambda_{00})}.
    \end{equation*}
\end{corollary}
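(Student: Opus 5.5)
We substitute the stationary distribution of Theorem~\ref{thm:stationary} into \eqref{eq:stationary_metric_J_matrix}, using the algebra of the uniform-transition matrix. Write $\mathbf Q^{\mathrm{UT}}=d\,\mathbf I_N+\eta\,\mathbf 1_N\mathbf 1_N^{\mathsf T}$ with $d=\rho-\eta$. Every polynomial in $\mathbf Q^{\mathrm{UT}}$ then lies in the two-dimensional commutative algebra spanned by $\mathbf I_N$ and $\mathbf J\triangleq\mathbf 1_N\mathbf 1_N^{\mathsf T}$. Since $\mathbf J^2=N\mathbf J$ and $\mathbf Q^{\mathrm{UT}}$ has eigenvalue $1$ on $\mathbf 1_N$ and $d$ on $\mathbf 1_N^{\perp}$, we get, for any $c\in[0,1)$,
\begin{equation*}
(\mathbf I_N-c\mathbf Q^{\mathrm{UT}})^{-1}=\frac{1}{1-cd}\Bigl(\mathbf I_N-\frac{\mathbf J}{N}\Bigr)+\frac{1}{1-c}\frac{\mathbf J}{N}.
\end{equation*}
Its diagonal entry is $\frac{1}{1-cd}\bigl(1-\frac1N\bigr)+\frac{1}{N(1-c)}$, and each off-diagonal entry is $\frac{1}{N}\bigl[\frac{1}{1-c}-\frac{1}{1-cd}\bigr]$. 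The same decomposition gives $\mathbf Q(\mathbf I_N-(1-p_\mathrm e)\mathbf Q)^{-1}$ with eigenvalue $\frac{1}{p_\mathrm e}$ on $\mathbf 1_N$ and $\frac{d}{1-(1-p_\mathrm e)d}$ on $\mathbf 1_N^\perp$. The analogous expression holds with $p_\mathrm b$.

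Next we restrict to $\mathcal S_{\mathrm{LRA}}$, i.e., $a=x$ and $b\neq x$. The term $\pi_1$ vanishes because it needs $a=b$. Each of $\pi_2,\pi_3$ factors as $\frac1N\cdot\lambda\cdot p\cdot[\text{first matrix}]_{\cdot,\cdot}\cdot[(\mathbf I-\lambda_{00}\mathbf Q)^{-1}]_{\cdot,x}$, so the sum over $\mathcal S_{\mathrm{LRA}}$ splits by index pattern.
\begin{itemize}
\item In $\pi_2$, the resolvent index is $(a,x)$ with $a=x$, so it uses the \emph{diagonal} entry. The first factor uses the off-diagonal entry $(b,a)$, $b\neq a$. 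Summing over the $N(N-1)$ pairs cancels the $\frac1N$ and leaves a factor $(N-1)$.
\item In $\pi_3$, the resolvent index is $(b,x)$ with $b\neq x$, so it uses the \emph{off-diagonal} entry. The first factor also uses an off-diagonal entry.
\end{itemize}
The off-diagonal entry of $p\,\mathbf Q(\mathbf I-(1-p)\mathbf Q)^{-1}$ is $\frac{p}{N}\bigl[\frac1p-\frac{d}{1-(1-p)d}\bigr]$. Simplifying, $1-(1-p)d-pd=1-d=N\eta$, so this entry equals $\frac{\eta}{1-(1-p)d}$. This produces the factor $(N-1)\eta/(1-(1-p_\mathrm e)d)$ and, for $\pi_3$, the analogous factor with $p_\mathrm b$.

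It remains to show that $N$ times the diagonal and off-diagonal resolvent entries, with $c=\lambda_{00}$, equal $\psi_\mathrm d$ and $\psi_\mathrm o$. This is the main obstacle: checking that the common-denominator forms reduce exactly to the stated numerators.
\begin{itemize}
\item For the off-diagonal entry, $\frac{1}{1-c}-\frac{1}{1-cd}=\frac{c(1-d)}{(1-c)(1-cd)}$. With $1-d=N\eta$, this becomes $N\lambda_{00}\eta/[(1-\lambda_{00})(1-\lambda_{00}d)]$. After the overall $\frac1N$, we get $\psi_\mathrm o$.
\item For the diagonal entry, the numerator is $(N-1)(1-c)+(1-cd)=N-c(N-1+d)$. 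Using $N-1+d=N-N\eta$, this equals $N[1+c(\eta-1)]$, which gives $\psi_\mathrm d$.
\end{itemize}
Collecting the $\lambda_{10}$ and $\lambda_{01}$ contributions yields \eqref{eq:uniform_avg_cra_compact}. We need $\lambda_{00}<1$ and $1-p_\mathrm{b},1-p_\mathrm{e}<1$ with $|d|<1$ to ensure the inverses exist. This holds since $p_\mathrm a,p_\mathrm h^r>0$ and $\eta,\rho\in(0,1)$.
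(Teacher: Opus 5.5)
Your proposal is correct and follows essentially the paper's route. The paper also writes $\mathbf Q^{\mathrm{UT}}=d\mathbf I_N+\eta\mathbf 1_N\mathbf 1_N^{\mathsf T}$ and inverts in closed form, using Sherman--Morrison where you use spectral projectors (the two are equivalent), identifies $\psi_{\mathrm d},\psi_{\mathrm o}$ as the diagonal and off-diagonal entries of $(\mathbf I_N-\lambda_{00}\mathbf Q^{\mathrm{UT}})^{-1}$, and multiplies the common LRA-state probability by $N(N-1)$. One wording slip: your own computations show that the resolvent entries themselves equal $\psi_{\mathrm d}$ and $\psi_{\mathrm o}$, not $N$ times them, and that is exactly what your $(N-1)$ bookkeeping requires.
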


\begin{IEEEproof}
    See Appendix~\ref{sec:proof_lra_probability_ut}.
\end{IEEEproof}

\begin{corollary}\label{cor:lra_persistence_ut}
For the {uniform-transition source} $\mathbf Q^{\mathrm{UT}}$, the long-term  average \ac{lra} persistence is
\begin{equation}
    \bar D_{\mathrm{LRA}}^{\mathrm{UT}}
    =
    \frac{
    \bar J_{\mathrm{LRA}}^{\mathrm{UT}}
    }{
    1-\rho(1-p_{\mathrm e})-(N-2)\eta\lambda_{10}
    }.
    \label{eq:ut_lra_persistence}
\end{equation}
\end{corollary}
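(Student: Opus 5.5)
The plan is to start from the matrix expression \eqref{eq:stationary_metric_D_matrix} in Theorem~\ref{thm:stationary_metrics} and show that, for $\mathbf Q=\mathbf Q^{\mathrm{UT}}$, the restricted matrix $\mathbf P_{\mathrm{LRA}}$ has constant row sums
\begin{equation*}
c\triangleq\rho(1-p_{\mathrm e})+(N-2)\eta\lambda_{10}.
\end{equation*}
If this holds, then $\mathbf P_{\mathrm{LRA}}\mathbf 1=c\mathbf 1$. This gives $(\mathbf I-\mathbf P_{\mathrm{LRA}})^{-1}\mathbf 1=(1-c)^{-1}\mathbf 1$, and transposing yields $\mathbf 1^{\mathsf T}(\mathbf I-\mathbf P_{\mathrm{LRA}}^{\mathsf T})^{-1}=(1-c)^{-1}\mathbf 1^{\mathsf T}$. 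Hence $\bar D^{\mathrm{UT}}_{\mathrm{LRA}}=(1-c)^{-1}\mathbf 1^{\mathsf T}\boldsymbol\pi_{\mathrm{LRA}}=\bar J^{\mathrm{UT}}_{\mathrm{LRA}}/(1-c)$ by \eqref{eq:stationary_metric_J_matrix}. This is exactly \eqref{eq:ut_lra_persistence}. With this identity, the explicit value of $\bar J^{\mathrm{UT}}_{\mathrm{LRA}}$ from Corollary~\ref{cor:lra_probability_ut} is not needed.

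The main step is the row-sum computation using the kernel of Lemma~\ref{lem:mc_3d}. Fix $\mathbf s=(x,x,b)\in\mathcal S_{\mathrm{LRA}}$ with $b\neq x$. We sum $[\mathbf Q]_{x,x'}\Psi(\mathbf s,\mathbf s')$ over targets $\mathbf s'=(x',x',b')$ with $b'\neq x'$, splitting into two cases.

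\emph{Case $x'=x$} (weight $\rho$). The constraint $a'=x'$ is met by all four update events.
\begin{itemize}
\item The $\lambda_{11}$ term forces $b'=x'$, which lies outside the region.
\item The $\lambda_{01}$ term also forces $b'=x'$, so it is excluded.
\item The $\lambda_{10}$ and $\lambda_{00}$ terms both keep $b'=b\neq x=x'$, so they are admissible.
\end{itemize}
This case contributes $\rho(\lambda_{10}+\lambda_{00})=\rho(1-p_{\mathrm e})$, since $\lambda_{10}+\lambda_{00}=1-\lambda_{11}-\lambda_{01}=1-p_{\mathrm e}$.

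\emph{Case $x'\neq x$} (weight $\eta$ each). The $\lambda_{01}$ and $\lambda_{00}$ terms keep $a'=a=x\neq x'$, so they fail $a'=x'$. The $\lambda_{11}$ term gives $b'=x'$, which is excluded. Only $\lambda_{10}$ survives, with $b'=b$, and it is admissible iff $b\neq x'$. There are exactly $N-2$ such choices $x'\notin\{x,b\}$, so this case contributes $(N-2)\eta\lambda_{10}$.

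The total is $c$, independent of $\mathbf s$, by the state symmetry of $\mathbf Q^{\mathrm{UT}}$.

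It remains to check that $1-c>0$, so that the inverse exists and the Neumann-series argument is valid. Since $\lambda_{10}\le 1-p_{\mathrm e}$, we have $c\le(1-p_{\mathrm e})(\rho+(N-2)\eta)=(1-p_{\mathrm e})(1-\eta)<1$, using $\eta>0$. I expect the only delicate point to be the careful case enumeration of which $\Psi$ terms land back in $\mathcal S_{\mathrm{LRA}}$; everything else is a direct consequence of Theorem~\ref{thm:stationary_metrics}.
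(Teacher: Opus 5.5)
Your proposal is correct and follows the paper's proof essentially step for step. The paper enumerates the same transitions within $\mathcal S_{\mathrm{LRA}}^{\mathrm{UT}}$: a self-loop of weight $\rho(1-p_{\mathrm e})$, and weight $\eta\lambda_{10}$ to $(x',x',b)$ for $x'\notin\{x,b\}$. It then uses the resulting constant row sum $\rho(1-p_{\mathrm e})+(N-2)\eta\lambda_{10}$ to get $\mathbf 1^{\mathsf T}(\mathbf I-\mathbf P_{\mathrm{LRA}}^{\mathsf T})^{-1}=(1-c)^{-1}\mathbf 1^{\mathsf T}$. Your explicit check that $1-c>0$ is a harmless addition, since invertibility already follows from the bound $\|\mathbf P_{\mathrm{LRA}}\|_\infty\le 1-p_{\mathrm e}$ in the proof of Theorem~\ref{thm:stationary_metrics}.
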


\begin{IEEEproof}
    See Appendix~\ref{sec:proof_ut_lra_persistence}.
\end{IEEEproof}

\subsection{Rare-Transmission Limits}
This subsection characterizes the rare-transmission limits $p_{\mathrm a}\to0^+$, separating the effects of source dynamics and receiver separation on \ac{lra} probability and persistence.
To characterize this limiting structure, we first introduce three quantities.
First, define the stationary-weighted asymptotic source-mismatch probability as
\begin{align*}
    \mu_{\mathrm{mis}} 
    &\triangleq
    \mathbb{E}_{i\sim\mathbf v} \bigg\{
    \lim_{\substack{t\to\infty\\ k\to\infty}}
    \Pr\left\{X_t = i, X_{t+k}\neq i\right\}
    \bigg\}\\
    &=\sum_{i\in \mathcal{N}}[\mathbf v]^2_i\left(1-[\mathbf v]_i\right)
\end{align*}
Second, define the mismatch-weighted source sojourn time as
\begin{align*}
\tau_{\rm soj}
&\triangleq
\sum_{\ell=1}^{\infty}
\mathbb E_{i\sim \mathbf v}
\bigg\{
\lim_{\substack{t\to\infty\\ k\to\infty}}
\Pr\bigg\{
\begin{aligned}
    & X_t=\cdots=X_{t+\ell-1}=i,\;\\
    & X_{t+k}\ne i
\end{aligned}
\bigg\}
\bigg\}\\
&= \sum_{i\in \mathcal{N}}\frac{[\mathbf v]^2_i\left(1-[\mathbf v]_i\right)}{1-[\mathbf Q]_{i,i}}.
\end{align*}
Finally, define the receiver-separation probability as
\begin{equation*}
    \mu_{\mathrm{sep}} \triangleq \Pr \left\{H_t^\mathrm{b}\neq H_t^\mathrm{e}\mid H_t^\mathrm{b}+ H_t^\mathrm{e} \ge 1\right\}
    =\frac{
        p_{\mathrm h}^{\mathrm b}+p_{\mathrm h}^{\mathrm e}
        -2p_{\mathrm h}^{\mathrm b}p_{\mathrm h}^{\mathrm e}
        }{
        p_{\mathrm h}^{\mathrm b}+p_{\mathrm h}^{\mathrm e}
        -p_{\mathrm h}^{\mathrm b}p_{\mathrm h}^{\mathrm e}
        }.
\end{equation*}
Here, $\mu_{\mathrm{sep}}$ is channel-dependent, whereas $\mu_{\mathrm{mis}}$ and $\tau_{\mathrm{soj}}$ are source-dependent.
For the two-state source $\mathbf Q^\mathrm{TS}$, these two source-dependent quantities become
\begin{equation*}
    \mu_{\mathrm{mis}}^\mathrm{TS}=\frac{pq}{(p+q)^2},\ \tau_{\mathrm{soj}}^\mathrm{TS}=\frac{p^2+q^2}{(p+q)^3}.
\end{equation*}
For the uniform-transition source $\mathbf Q^\mathrm{UT}$, they become
\begin{equation*}
    \mu_{\mathrm{mis}}^\mathrm{UT}=\frac{N-1}{N^2},\ \tau_{\mathrm{soj}}^\mathrm{UT}=\frac{N-1}{N^2(1-\rho)}.
\end{equation*}
\begin{figure}
    \centering
    \includegraphics[width=1\linewidth]{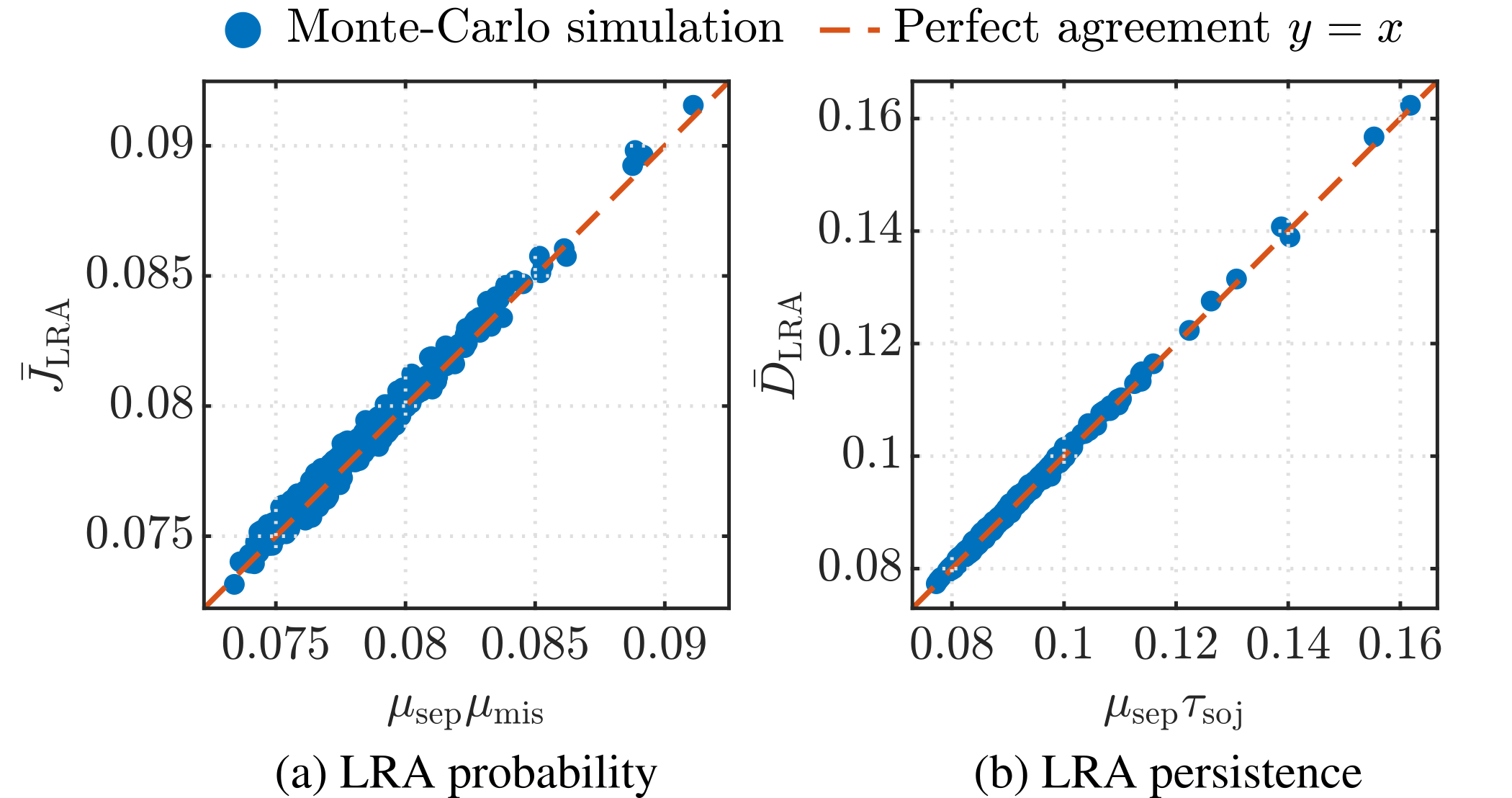}
    \caption{Validation of the rare-transmission factorization for random general Markov sources{, with $N=10$, $p_\mathrm{h}^\mathrm{b}=0.8$, $p_\mathrm{h}^\mathrm{e}=0.2$, and $p_\mathrm{a}=0.0005$}.}
    \label{fig:limit_extension}
\end{figure}

\begin{theorem}[Rare-Transmission Limit]\label{thm:rare_transmission_limit}
    For the two-state source $\mathbf Q^\mathrm{TS}$ and uniform-transition source $\mathbf Q^\mathrm{UT}$, as the transmission probability $p_{\mathrm a}\to 0^+$, the \ac{lra} probability $\bar{J}_\mathrm{LRA}$ converges to
    \begin{equation*}
    \lim_{p_{\mathrm a}\to 0^+}
    \bar J_{\mathrm{LRA}}
    =
    \mu_{\mathrm{sep}} \mu_{\mathrm{mis}},
    \end{equation*}
    and the \ac{lra} persistence $\bar{D}_\mathrm{LRA}$ converges to
    \begin{equation*}
    \lim_{p_{\mathrm a}\to 0^+}
    \bar D_{\mathrm{LRA}}
    =
    \mu_{\mathrm{sep}} \tau_{\mathrm{soj}}.
    \end{equation*}
\end{theorem}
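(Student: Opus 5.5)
The plan is to take the limits directly in the algebraic expressions of Corollaries~\ref{cor:lra_probability_TS}--\ref{cor:lra_persistence_ut}. These are rational in $p_{\mathrm a}$, but several factors vanish as $p_{\mathrm a}\to0^+$, so we first rescale. Every $\lambda_{ij}$ with $(i,j)\neq(0,0)$ is $p_{\mathrm a}$ times a channel constant, and $1-\lambda_{00}=p_{\mathrm a}\kappa$ with $\kappa\triangleq p_{\mathrm h}^{\mathrm b}+p_{\mathrm h}^{\mathrm e}-p_{\mathrm h}^{\mathrm b}p_{\mathrm h}^{\mathrm e}$. Hence ratios such as $\lambda_{10}/(1-\lambda_{00})$ stay finite, while the remaining factors converge to nonzero source-dependent limits. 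The separation factor $\mu_{\mathrm{sep}}=(p_{\mathrm h}^{\mathrm b}+p_{\mathrm h}^{\mathrm e}-2p_{\mathrm h}^{\mathrm b}p_{\mathrm h}^{\mathrm e})/\kappa$ should emerge as $(\lambda_{10}+\lambda_{01})/(1-\lambda_{00})$.

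\emph{Two-state source.} In Corollary~\ref{cor:lra_probability_TS}, set $p_{\mathrm a}=0$ to get
\[
\bar J^{\mathrm{TS}}_{\mathrm{LRA}}\to\frac{\beta_2}{\beta_5}=\frac{pq(p+q)(2p_{\mathrm h}^{\mathrm b}p_{\mathrm h}^{\mathrm e}-p_{\mathrm h}^{\mathrm b}-p_{\mathrm h}^{\mathrm e})}{(p_{\mathrm h}^{\mathrm b}p_{\mathrm h}^{\mathrm e}-p_{\mathrm h}^{\mathrm b}-p_{\mathrm h}^{\mathrm e})(p+q)^3}=\mu_{\mathrm{sep}}\frac{pq}{(p+q)^2}.
\]
This equals $\mu_{\mathrm{sep}}\mu_{\mathrm{mis}}^{\mathrm{TS}}$, since $\beta_5\neq0$.

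For the persistence, evaluate \eqref{eq:DLRA_TS_closed_form} at $p_{\mathrm a}=0$. The limit of $\beta_7$ is
\[
\beta_7\to pq(p-q)\cdot\frac{-(p+q)(2p_{\mathrm h}^{\mathrm b}p_{\mathrm h}^{\mathrm e}-p_{\mathrm h}^{\mathrm b}-p_{\mathrm h}^{\mathrm e})}{p+q}=-(p-q)\,\frac{\beta_2}{p+q}.
\]
Substituting gives
\[
\bar D^{\mathrm{TS}}_{\mathrm{LRA}}\to\frac{\beta_2}{2\beta_5}\left[\frac{1-\frac{p-q}{p+q}}{p}+\frac{1+\frac{p-q}{p+q}}{q}\right]
=\frac{\beta_2}{2\beta_5}\cdot\frac{2(p^2+q^2)}{pq(p+q)}.
\]
Combined with $\beta_2/\beta_5=\mu_{\mathrm{sep}}\,pq/(p+q)^2$, this yields $\mu_{\mathrm{sep}}(p^2+q^2)/(p+q)^3=\mu_{\mathrm{sep}}\tau^{\mathrm{TS}}_{\mathrm{soj}}$. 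All denominators, namely $\beta_5$, $p$, $q$ and the denominator of $\beta_7$, are nonzero at $p_{\mathrm a}=0$, so continuity of the rational functions justifies the substitution.

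\emph{Uniform-transition source.} In Corollary~\ref{cor:lra_probability_ut}, $\psi_{\mathrm d}$ and $\psi_{\mathrm o}$ contain $1/(1-\lambda_{00})=1/(p_{\mathrm a}\kappa)$, while $\lambda_{10},\lambda_{01}=O(p_{\mathrm a})$. As $p_{\mathrm a}\to0$ we have $\lambda_{00}\to1$, $p_{\mathrm b},p_{\mathrm e}\to0$, and $1-d=N\eta$. Then:
\begin{itemize}
\item $\lambda_{10}\psi_{\mathrm d}\to \frac{p_{\mathrm h}^{\mathrm b}(1-p_{\mathrm h}^{\mathrm e})}{\kappa}\cdot\frac{\eta}{N\eta}$;
\item $\lambda_{01}\psi_{\mathrm o}\to \frac{p_{\mathrm h}^{\mathrm e}(1-p_{\mathrm h}^{\mathrm b})}{\kappa}\cdot\frac{\eta}{N\eta}$;
\item each outer denominator $1-(1-p_{\mathrm r})d\to N\eta$.
\end{itemize}
Summing,
\[
\bar J^{\mathrm{UT}}_{\mathrm{LRA}}\to(N-1)\eta\cdot\frac{1}{N}\cdot\frac{1}{N\eta}\cdot\frac{p_{\mathrm h}^{\mathrm b}+p_{\mathrm h}^{\mathrm e}-2p_{\mathrm h}^{\mathrm b}p_{\mathrm h}^{\mathrm e}}{\kappa}=\mu_{\mathrm{sep}}\frac{N-1}{N^2}.
\]
For the persistence, Corollary~\ref{cor:lra_persistence_ut} has denominator $1-\rho(1-p_{\mathrm e})-(N-2)\eta\lambda_{10}\to1-\rho$, which gives $\mu_{\mathrm{sep}}(N-1)/(N^2(1-\rho))=\mu_{\mathrm{sep}}\tau^{\mathrm{UT}}_{\mathrm{soj}}$.

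\textbf{Main obstacle.} The only delicate point is the $0/0$-type cancellation $\lambda_{ij}/(1-\lambda_{00})$ in the uniform-transition case. Handling it requires cleanly factoring out $p_{\mathrm a}$ before taking limits, and confirming that $1-\lambda_{00}d\to1-d=N\eta>0$, so that no other factor degenerates. In the two-state case the corresponding check is only that $\beta_5\neq0$ and that the limit of $\beta_7$ is as computed.
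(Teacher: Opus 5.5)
Your proposal is correct and follows the paper's own argument: the paper likewise takes $p_{\mathrm a}\to0^+$ in the algebraic expressions of Corollaries~\ref{cor:lra_probability_TS}--\ref{cor:lra_persistence_ut} and identifies the resulting factors as $\mu_{\mathrm{sep}}$, $\mu_{\mathrm{mis}}$, and $\tau_{\mathrm{soj}}$. Your explicit steps check out: the ratio $\beta_2/\beta_5$, the limit $\beta_7\to-(p-q)\beta_2/(p+q)$, the nonvanishing denominators, and the cancellation of $p_{\mathrm a}$ between $\lambda_{10},\lambda_{01}$ and $1-\lambda_{00}=p_{\mathrm a}\kappa$ in the uniform-transition case.
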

\begin{IEEEproof}
    The result follows directly by taking the rare-transmission limit $p_{\mathrm a}\to 0^+$ in the algebraic expressions in Corollaries \ref{cor:lra_probability_TS}-\ref{cor:lra_persistence_ut} and identifying the resulting factors as $\mu_{\mathrm{mis}}$, $\mu_{\mathrm{sep}}$, and $\tau_{\mathrm{soj}}$.
\end{IEEEproof}

Theorem~\ref{thm:rare_transmission_limit} separates the roles of source dynamics and channel-induced receiver separation. Both limits share the receiver-separation factor $\mu_{\mathrm{sep}}$, while their source-dependent factors capture state mismatch and state sojourns for \ac{lra} probability and persistence, respectively. Thus, achieving and sustaining joint accuracy and security depend on different aspects of source dynamics. The numerical experiments in Fig.~\ref{fig:limit_extension} further suggest that this factorized structure also holds for general Markov sources beyond these two structured models.

Moreover, these limits reveal contrasting effects of source-transition asymmetry on \ac{lra} probability and persistence, as detailed in the following remark.

\begin{remark}[Transition Asymmetry]
For the two-state source $\mathbf Q^{\rm TS}$, the rare-transmission limits show that $\bar J_{\rm LRA}^{\rm TS}$ and $\bar D_{\rm LRA}^{\rm TS}$ can favor different source-transition structures. Let $s=p+q$ and $\delta=|p-q|$. For fixed $s$, the source-dependent factor in the limiting \ac{lra} probability is
$\mu_{\rm mis}^{\rm TS} = \frac{pq}{(p+q)^2} = \frac{s^2-\delta^2}{4s^2},$
which is maximized at $\delta=0$, \emph{i.e.}, $p=q$. In contrast, the source-dependent factor in the limiting \ac{lra} persistence is
$ \tau_{\rm soj}^{\rm TS} = \frac{p^2+q^2}{(p+q)^3} = \frac{s^2+\delta^2}{2s^3},$
which increases with $\delta$. Hence, \ac{lra} probability favors balanced transitions, whereas \ac{lra} persistence favors asymmetric transitions.
\end{remark}

\section{{Transmission Probability} Optimization}
\label{sec:optimization_pa}
{Building on the performance characterizations in Sections~\ref{sec:general_performance} and \ref{sec:sepcialization}, we optimize the transmission probability of the \ac{rs} policy to maximize either \ac{lra} probability or persistence. The general-source expressions enable numerical optimization, while the structured-source expressions support polynomial root finding and, for \ac{lra} probability under the two-state source, closed-form solutions.}
\subsection{Optimality Analysis}
For a generic objective $F(p_{\mathrm a})$, representing {either} $\bar{J}_{\mathrm{LRA}}$ {or} $\bar{D}_{\mathrm{LRA}}$ {for a fixed source and fixed channel parameters}, we consider
\begin{equation*}
    \mathscr{P}1:\quad
    \sup_{0< p_{\mathrm a}\leq 1}
    \ F(p_{\mathrm a}).
\end{equation*}
By Corollary~\ref{cor:rational_structure}, $F(p_{\mathrm a})$ is a rational function for general finite-state Markov sources and can be written as $ F(p_{\mathrm a}) = {\mathcal N_F(p_{\mathrm a})}/{\mathcal D_F(p_{\mathrm a})}$, where both $\mathcal N_F(p_{\mathrm a})$ and $\mathcal D_F(p_{\mathrm a})$ are polynomials in $p_{\mathrm a}$. Theorems~\ref{thm:stationary} and \ref{thm:stationary_metrics} enable direct evaluation of $F(p_{\mathrm a})$, supporting numerical optimization over the transmission probability. For two-state and uniform-transition sources, Corollaries \ref{cor:lra_probability_TS}-\ref{cor:lra_persistence_ut} further provide explicit expressions for these polynomials.

{To account for the open boundary at zero, we continuously extend the objective by defining
$
F(0)\triangleq\lim_{p_{\mathrm a}\to0^+}F(p_{\mathrm a}).
$
This value represents the rare-transmission limit, rather than the performance of a zero-transmission policy. After cancellation of removable factors, the rational representation can be chosen with $D_F(p_{\mathrm a})\ne0$ on $[0,1]$. Consequently,
\[
\sup_{0<p_{\mathrm a}\le1}F(p_{\mathrm a})
=
\max_{0\le p_{\mathrm a}\le1}F(p_{\mathrm a}).
\]
Let $p_\mathrm{a}^*$ and $p_\mathrm{a}^\dagger$ denote optimal solutions to the left and right problems, respectively. If $p_{\mathrm a}^{\dagger}>0$, it also solves $\mathcal P1$. If $p_{\mathrm a}^{\dagger}=0$, the supremum is approached as $p_{\mathrm a}\to0^+$.}

\begin{theorem}[Optimality]
\label{thm:polynomial_optimal_pa}
For any objective $F(p_{\mathrm a})$ {in $\mathscr P 1$, a maximizer of its continuous extension to} $p_{\mathrm a}\in[0,1]$ is 
\begin{equation}
    p_{\mathrm a}^{\dagger}
    \in
    \arg\max_{p_{\mathrm a}\in\mathcal C_F} F(p_{\mathrm a}),
\end{equation}
where
$
    \mathcal C_F
    \triangleq
    \{0,1\}
    \cup
    \left\{
    p_{\mathrm a}\in(0,1):
    \mathcal P_F(p_{\mathrm a})=0
    \right\},
$
and
$
    \mathcal P_F(p_{\mathrm a})
    \triangleq
    \mathcal N_F'(p_{\mathrm a})\mathcal D_F(p_{\mathrm a})
    -
    \mathcal N_F(p_{\mathrm a})\mathcal D_F'(p_{\mathrm a}).
$

\end{theorem}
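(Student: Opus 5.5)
The plan is the standard closed-interval extremum argument (Fermat's interior-critical-point theorem), applied to the continuous extension of $F$ described just before the statement. First I would fix the rational representation $F=\mathcal N_F/\mathcal D_F$ after cancelling removable factors, so that $\mathcal D_F(p_{\mathrm a})\neq 0$ on $[0,1]$. This is possible because, by Corollary~\ref{cor:rational_structure}, $F$ is rational on $(0,1]$ and the limit $F(0)$ exists. The existence is clear for the structured sources from Theorem~\ref{thm:rare_transmission_limit}. In general it holds because $F$ is bounded on $(0,1]$: $\bar J_{\mathrm{LRA}}\le 1$, and $\bar D_{\mathrm{LRA}}$ is finite by Theorem~\ref{thm:stationary_metrics}. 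A rational function bounded near $0$ cannot have a pole at $0$. With this representation, $F$ is continuous on the compact interval $[0,1]$, so the Weierstrass theorem guarantees that a maximizer $p_{\mathrm a}^\dagger\in[0,1]$ exists.

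Next I would split into cases according to where the maximizer lies. If $p_{\mathrm a}^\dagger\in\{0,1\}$, it lies in $\mathcal C_F$ by definition. If $p_{\mathrm a}^\dagger\in(0,1)$, it is an interior maximizer of a function that is differentiable on $(0,1)$, since $\mathcal D_F$ does not vanish there. Fermat's theorem then gives $F'(p_{\mathrm a}^\dagger)=0$. The quotient rule gives
\begin{equation*}
F'(p_{\mathrm a})=\frac{\mathcal P_F(p_{\mathrm a})}{\mathcal D_F(p_{\mathrm a})^2}.
\end{equation*}
Because the denominator is nonzero, this forces $\mathcal P_F(p_{\mathrm a}^\dagger)=0$, so $p_{\mathrm a}^\dagger\in\mathcal C_F$.

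In both cases $\max_{[0,1]}F=F(p_{\mathrm a}^\dagger)\le\max_{\mathcal C_F}F$. The reverse inequality is immediate because $\mathcal C_F\subseteq[0,1]$. Hence any element of $\arg\max_{\mathcal C_F}F$ attains the maximum over $[0,1]$, which is the claim.

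I would also remark that $\mathcal C_F$ is finite, so the argmax is well defined. This holds whenever $\mathcal P_F\not\equiv 0$. If $\mathcal P_F\equiv0$, then $F$ is constant and every point is optimal, so only the endpoints need to be kept.

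The main obstacle is the first step: justifying that the cancelled representation has $\mathcal D_F\neq0$ on all of $[0,1]$, and not only at $0$. On $(0,1]$ I would argue that $F$ is finite everywhere by Theorem~\ref{thm:stationary_metrics}. The matrices $\mathbf I-\lambda_{00}\mathbf Q$ and $\mathbf I-\mathbf P_{\mathrm{LRA}}^{\mathsf T}$ are invertible for $p_{\mathrm a}>0$: $\lambda_{00}<1$, and $\mathbf P_{\mathrm{LRA}}$ is substochastic with spectral radius below one. A rational function that is finite at a point cannot have a non-removable zero of its reduced denominator there. This gives $\mathcal D_F\neq 0$ on $(0,1]$, and the boundedness argument of the first paragraph covers the point $0$.
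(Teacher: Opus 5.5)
Your proposal is correct and uses the same argument as the paper: an interior maximizer must satisfy $F'(p_{\mathrm a})=\mathcal P_F(p_{\mathrm a})/\mathcal D_F(p_{\mathrm a})^2=0$, and the endpoints $0$ and $1$ are added as candidates; your Weierstrass step and the reverse inequality make explicit what the paper leaves implicit. One small point concerns your extra justification of $\mathcal D_F(0)\neq 0$, which the paper itself simply assumes in the setup: finiteness of $\bar D_{\mathrm{LRA}}$ at each $p_{\mathrm a}>0$ does not give boundedness as $p_{\mathrm a}\to0^+$, since the bound $1-p_{\mathrm e}$ on the row sums of $\mathbf P_{\mathrm{LRA}}$ degenerates there, so you would instead use the uniform bound that the row sum from $(x,x,b)$ is at most $[\mathbf Q]_{x,x}+p_{\mathrm a}(1-[\mathbf Q]_{x,x})$, which stays below one because $[\mathbf Q]_{x,x}<1$ for an irreducible source with $N\ge 2$.
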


\begin{IEEEproof}
Since $F(p_{\mathrm a})$ is a rational function with $\mathcal D_F(p_{\mathrm a})\neq 0$, any interior optimum must satisfy $F'(p_{\mathrm a})=0$, which is equivalent to $\mathcal P_F(p_{\mathrm a})=0$; together with the boundary points $p_{\mathrm a}=0$ and $p_{\mathrm a}=1$, this gives the candidate set $\mathcal C_F$.
\end{IEEEproof}
\begin{table}[t]
\centering
\caption{Computational Complexity Comparison.}
\label{tab:complexity_pa_optimization}
\scriptsize
\begin{tabular}{
p{0.31\columnwidth}
p{0.24\columnwidth}
p{0.26\columnwidth}}
\toprule
\textbf{Method} & \textbf{Applicability} & \textbf{Complexity} \\
\midrule
Theorem~\ref{thm:stationary} + Theorem~\ref{thm:stationary_metrics} + grid search &
General sources &
$\mathcal O(N^3/\epsilon)$ for $\bar J_{\mathrm{LRA}}$;
$\mathcal O(N^6/\epsilon)$ for $\bar D_{\mathrm{LRA}}$ \\
\midrule
Theorem~\ref{thm:polynomial_optimal_pa} &
Uniform-transition and two-state sources &
$\widetilde{\mathcal O}(\log(1/\epsilon))$ \\
\midrule
Corollary~\ref{cor:opt_P1} &
$\bar J_{\mathrm{LRA}}^{\mathrm{TS}}$ &
$\mathcal O(1)$ \\
\bottomrule
\end{tabular}
\end{table}

\begin{figure*}
    \centering
    \includegraphics[width=1\linewidth]{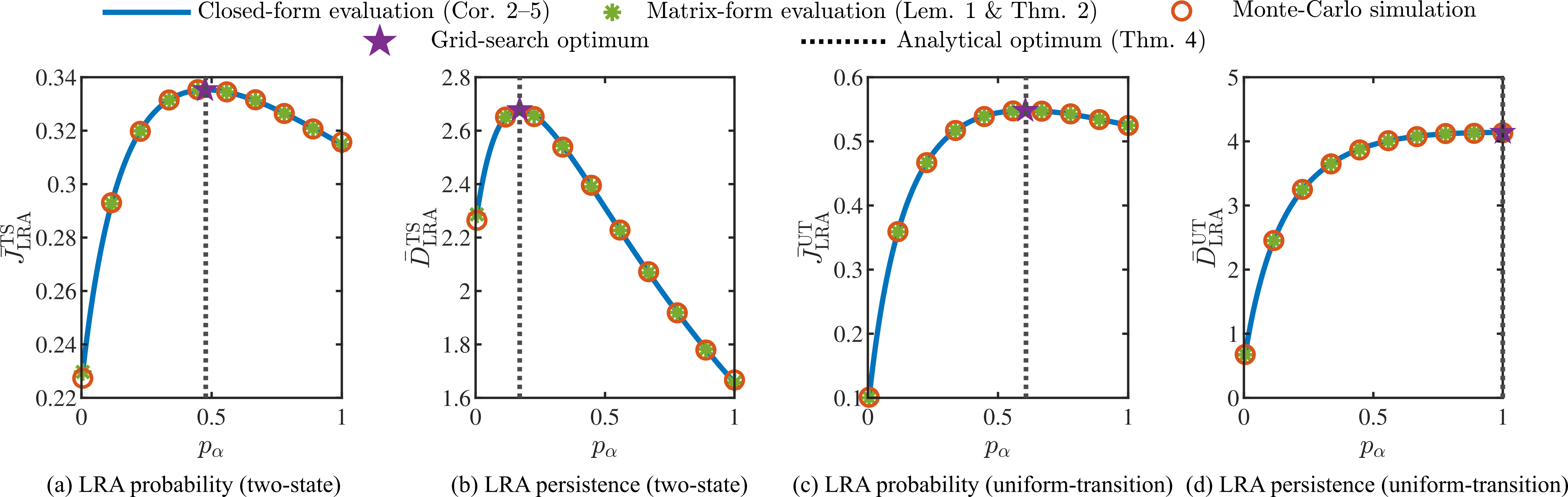}
    \caption{\ac{lra} probability and persistence metrics versus the {transmission} probability $p_{a}$ for {two-state and uniform-transition sources}. Subfigures (a) and (b) show the \ac{lra} probability and \ac{lra} persistence for the {two-state} source with $p=0.1$, $q=0.1$, $p_{\mathrm h}^{\mathrm b}=0.9$, and $p_{\mathrm h}^{\mathrm e}=0.1$. 
    Subfigures (c) and (d) show the corresponding results for the {uniform-transition} source with $N=10$, $\rho=0.85$, $\eta=0.0167$, $p_{\mathrm h}^{\mathrm b}=0.9$, and $p_{\mathrm h}^{\mathrm e}=0.1$. }
    \label{fig:Monto_carlo}
\end{figure*}

Theorem~\ref{thm:polynomial_optimal_pa} provides a general characterization of the optimal {transmission} probability: it can be found by solving a scalar polynomial equation and comparing the resulting stationary candidates with the boundary points.
For structured source models, this characterization may be further simplified by exploiting the specific transition structure of the underlying Markov chain.
Such simplifications are source-dependent and therefore need to be derived on a case-by-case basis.
As a representative example, the {two-state source} admits the following closed-form characterization.

\begin{corollary} \label{cor:opt_P1}
For the {two-state source} $\mathbf{Q}^\mathrm{TS}$, {a maximizer of its continuous extension to $p_{\mathrm a}\in[0,1]$} for $F(p_\mathrm{a}) = \bar{J}_{\mathrm{LRA}}^\mathrm{TS}$ is 
\begin{equation*}
p_\mathrm a^\dagger= \min\left\{1,\max\left\{\frac{-\beta_2\beta_3+\sqrt{\Delta}}{\beta_1\beta_3},0\right\}\right\},
\end{equation*}
where $\Delta=\beta_2^2\beta_3^2+\beta_1\beta_3(\beta_1\beta_5-\beta_2\beta_4)\ge 0$ {when $\beta_1\beta_3\neq0$. Otherwise, $p_\mathrm a^\dagger=\mathbbm{1}\{F(1)>F(0)\}$}.
\end{corollary}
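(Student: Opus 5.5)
We apply Theorem~\ref{thm:polynomial_optimal_pa} to the explicit rational form of Corollary~\ref{cor:lra_probability_TS}, with $\mathcal N_F(p_{\mathrm a})=\beta_1p_{\mathrm a}+\beta_2$ and $\mathcal D_F(p_{\mathrm a})=\beta_3p_{\mathrm a}^2+\beta_4p_{\mathrm a}+\beta_5$. The numerator of the derivative is
\begin{equation*}
\mathcal P_F(p_{\mathrm a})=\beta_1(\beta_3p_{\mathrm a}^2+\beta_4p_{\mathrm a}+\beta_5)-(\beta_1p_{\mathrm a}+\beta_2)(2\beta_3p_{\mathrm a}+\beta_4)
=-\beta_1\beta_3p_{\mathrm a}^2-2\beta_2\beta_3p_{\mathrm a}+(\beta_1\beta_5-\beta_2\beta_4).
\end{equation*}
This is a quadratic, so when $\beta_1\beta_3\neq0$ its roots are
\begin{equation*}
p_{\mathrm a}=\frac{-\beta_2\beta_3\pm\sqrt{\Delta}}{\beta_1\beta_3},
\end{equation*}
where $\Delta$ is exactly the discriminant stated in the corollary.

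\textbf{Step 1: $\Delta\ge 0$.} We read off signs from the coefficients. The common factor $c\triangleq p_{\mathrm h}^{\mathrm b}p_{\mathrm h}^{\mathrm e}-p_{\mathrm h}^{\mathrm b}-p_{\mathrm h}^{\mathrm e}$ is negative, so $\beta_5<0$ and $\beta_3\le0$. Also $\mathcal D_F$ has a fixed sign on $[0,1]$, since $F$ is a probability with positive numerator at $p_{\mathrm a}=0$. Here $\beta_2<0$ because $2p_{\mathrm h}^{\mathrm b}p_{\mathrm h}^{\mathrm e}-p_{\mathrm h}^{\mathrm b}-p_{\mathrm h}^{\mathrm e}=-\bigl[p_{\mathrm h}^{\mathrm b}(1-p_{\mathrm h}^{\mathrm e})+p_{\mathrm h}^{\mathrm e}(1-p_{\mathrm h}^{\mathrm b})\bigr]\le0$. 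Rather than chasing signs directly, a cleaner route is this: $\mathcal P_F(0)=\beta_1\beta_5-\beta_2\beta_4=F'(0)\mathcal D_F(0)^2$. One can then show that the quadratic has a real root by exhibiting a sign change or a value of the right sign. The identity
\begin{equation*}
\Delta=\beta_3^2\beta_2^2+\beta_1\beta_3\,\mathcal P_F(0)
\end{equation*}
reduces $\Delta\ge0$ to the case $\beta_1\beta_3\mathcal P_F(0)<0$. I expect this to be the main obstacle. My first attempt is to prove $\beta_1\beta_5-\beta_2\beta_4$ has sign compatible with $\beta_1\beta_3$ by direct substitution. Failing that, I would use that $\mathcal D_F$ has no root in $[0,1]$ together with the equivalent form $\Delta=\beta_3^2\bigl[\beta_2^2-\beta_1\beta_2\beta_4/\beta_3+\beta_1^2\beta_5/\beta_3\bigr]$, which equals $\beta_3\,\mathcal D_F(-\beta_2/\beta_1)\cdot\beta_1^2$. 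Thus $\Delta\ge0$ iff $\beta_3\mathcal D_F(-\beta_2/\beta_1)\ge0$, a statement about $\mathcal D_F$ at the numerator's zero. This should follow from the structure of $\beta_3,\beta_4,\beta_5$, or be checked by factoring $\mathcal D_F$ via $(p+q)^3$ and $(1-p-q)$.

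\textbf{Step 2: select the root.} The leading coefficient of $\mathcal P_F$ is $-\beta_1\beta_3$. We argue that the root with $+\sqrt{\Delta}$ is the one where $\mathcal P_F$ changes sign from positive to negative, i.e.\ where $F'$ changes from $+$ to $-$ (using $\mathcal D_F^2>0$). This holds in either sign case of $\beta_1\beta_3$, since $(-\beta_2\beta_3+\sqrt\Delta)/(\beta_1\beta_3)$ is the root at which the downward-opening (in the appropriate orientation) quadratic crosses zero with negative slope. Hence $F$ is unimodal on $\mathbb R$ away from poles, with its unique local maximum at this point. Restricting to $[0,1]$ then gives the maximizer by clipping: if the critical point is below $0$, $F$ is decreasing on $[0,1]$ and $p_{\mathrm a}^\dagger=0$; if it is above $1$, $F$ is increasing and $p_{\mathrm a}^\dagger=1$. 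This yields the $\min\{1,\max\{\cdot,0\}\}$ formula. This step relies on $\mathcal D_F\ne0$ on $[0,1]$, assumed before Theorem~\ref{thm:polynomial_optimal_pa}.

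\textbf{Step 3: degenerate case.} If $\beta_1\beta_3=0$, then $\mathcal P_F$ is affine: $-2\beta_2\beta_3p_{\mathrm a}+\mathcal P_F(0)$. If $\beta_3=0$ it is constant, and if $\beta_1=0$ it is linear. In both subcases, $F'$ has at most one sign change on $[0,1]$, which would be a minimum in the linear case by the same orientation argument. So $F$ is monotone or has an interior minimum, and the maximum over $\mathcal C_F$ is attained at an endpoint. Comparing the endpoints gives $p_{\mathrm a}^\dagger=\mathbbm 1\{F(1)>F(0)\}$.
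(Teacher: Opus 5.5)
Your root selection in Step~2 is correct, and tidier than the paper's two-case split: at $r=(-\beta_2\beta_3+\sqrt\Delta)/(\beta_1\beta_3)$ one has $\mathcal P_F'(r)=-2\sqrt\Delta$ whatever the sign of $\beta_1\beta_3$. The genuine gap is that ``hence $F$ is unimodal'' does not follow. The other root $r'=(-\beta_2\beta_3-\sqrt\Delta)/(\beta_1\beta_3)$ has $\mathcal P_F'(r')=+2\sqrt\Delta$, so it is a local minimum of $F$. Clipping $r$ to $[0,1]$ is valid only if $r'$ cannot interfere. If, say, $r<0<r'<1$, then $F$ decreases and then increases on $[0,1]$, so the maximizer is whichever endpoint is better, yet your formula returns $0$.

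What you need is that $\mathcal P_F$ is nonincreasing on $[0,1]$, so it changes sign at most once there, from $+$ to $-$. The paper gets this from $\mathcal P_F'(p_{\mathrm a})=-2\beta_3(\beta_1p_{\mathrm a}+\beta_2)$ by locating the vertex $-\beta_2/\beta_1$. It lies at or left of $0$ when $\beta_1<0$ (since $\beta_2\le0$), and at or right of $1$ when $\beta_1>0$ (via the explicit identity showing $\beta_1+\beta_2\le0$). Within your framework there is a shortcut. We have $\mathcal P_F'=-\mathcal N_F\mathcal D_F''=-2\beta_3\mathcal N_F$. Since $\mathcal D_F(0)=\beta_5<0$ and $\mathcal D_F$ has no zero on $[0,1]$, $F\ge0$ forces $\mathcal N_F\le0$ on $[0,1]$, and with $\beta_3\le0$ this gives $\mathcal P_F'\le0$ there. (Incidentally, the numerator at $p_{\mathrm a}=0$ is $\beta_2\le0$, not positive.)

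The same sign fact shows that Step~3 has the orientation backwards. For $\beta_1=0\neq\beta_3$, $\mathcal P_F(p_{\mathrm a})=-2\beta_2\beta_3p_{\mathrm a}-\beta_2\beta_4$ has slope $-2\beta_2\beta_3\le0$. A zero crossing would therefore be a local \emph{maximum}, and $\mathbbm 1\{F(1)>F(0)\}$ would be wrong if it fell in $(0,1)$. You must locate it, as the paper does: $-\beta_4/(2\beta_3)=(p_{\mathrm h}^{\mathrm b}+p_{\mathrm h}^{\mathrm e})(p+q)/[2p_{\mathrm h}^{\mathrm b}p_{\mathrm h}^{\mathrm e}(p+q-1)]$ is negative for $p+q<1$. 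It exceeds $1$ for $p+q>1$ because $p_{\mathrm h}^{\mathrm b}+p_{\mathrm h}^{\mathrm e}\ge2p_{\mathrm h}^{\mathrm b}p_{\mathrm h}^{\mathrm e}$.

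Finally, Step~1 is left as a plan, so as written you have not established $\Delta\ge0$. Your reduction $\Delta=\beta_1^2\beta_3\mathcal D_F(-\beta_2/\beta_1)$ is correct and does close. Use the factorization $\mathcal D_F(p_{\mathrm a})=c\,(p+q)[(p+q)+p_{\mathrm h}^{\mathrm b}(1-p-q)p_{\mathrm a}][(p+q)+p_{\mathrm h}^{\mathrm e}(1-p-q)p_{\mathrm a}]$, with $c=p_{\mathrm h}^{\mathrm b}p_{\mathrm h}^{\mathrm e}-p_{\mathrm h}^{\mathrm b}-p_{\mathrm h}^{\mathrm e}$. At $p_{\mathrm a}=-\beta_2/\beta_1$ the two brackets equal $pq(p+q)(p_{\mathrm h}^{\mathrm e}-p_{\mathrm h}^{\mathrm b})/\beta_1$ times $p_{\mathrm h}^{\mathrm b}+(p+q)p_{\mathrm h}^{\mathrm e}(1-p_{\mathrm h}^{\mathrm b})$ and $p_{\mathrm h}^{\mathrm e}+p_{\mathrm h}^{\mathrm b}(2-p-q)(1-p_{\mathrm h}^{\mathrm e})$, respectively. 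Since $c\,(p+q)\beta_3\ge0$, this reproduces exactly the paper's nonnegative product for $\Delta$.
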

\begin{IEEEproof}
    See Appendix~\ref{sec:proof_cor_opt_P1}.
\end{IEEEproof}

\begin{remark}\label{rem:extrem_analysis_opt}
    In the symmetric-channel case with $p_\mathrm{h}^\mathrm{b}=p_\mathrm{h}^\mathrm{e}\in(0,1)$, $\bar{J}_{\mathrm{LRA}}^\mathrm{TS}(p_\mathrm{a})$ is monotonic over $p_\mathrm{a}\in[0,1]$: it is increasing when $p+q>1$ and decreasing when $p+q<1$, and constant when $p+q=1$. Accordingly, $p_\mathrm{a}^\dagger=1$ if $p+q>1$, $p_\mathrm{a}^\dagger=0$ if $p+q<1$, and $p_\mathrm{a}^\dagger$ can be any value in $[0,1]$ if $p+q=1$.
 
    Thus, in the symmetric-channel case, \emph{the optimal transmission policy is determined entirely by the source temporal correlation}: persistent sources favor {sparse transmission}, alternating sources favor full transmission, and the \ac{iid} case is indifferent to the transmission probability.
\end{remark}

\subsection{Complexity Analysis}
Let $\epsilon$ denote {the grid resolution or the tolerance used to approximate candidate roots}.
A classical approach is to construct the induced Markov chain, compute its stationary distribution for each ${p}_\mathrm{a}$, and perform grid search over ${p}_\mathrm{a}$. Since the induced chain has dimension $N^3$, this approach has complexity $\mathcal O(N^9/\epsilon)$ under direct dense-matrix computation.

In contrast, the proposed analysis leads to lower-complexity optimization methods, as illustrated in Table~\ref{tab:complexity_pa_optimization}.
For general Markov sources, Theorem~\ref{thm:stationary} provides an explicit stationary distribution characterization without solving the $N^3$-dimensional stationary equation, reducing the complexity to $\mathcal O(N^3/\epsilon)$ for $\bar J_\mathrm{LRA}$.
For $\bar D_\mathrm{LRA}$, the restricted \ac{lra} transition matrix has dimension $N_\mathrm{LRA} = N(N-1)$, yielding complexity $\mathcal O(N^6/\epsilon)$.

For the structured {two-state and uniform-transition sources}, Theorem~\ref{thm:polynomial_optimal_pa} further avoids exhaustive grid search by converting the optimization into scalar polynomial root finding, yielding $\widetilde{\mathcal O}(\log(1/\epsilon))$.
Finally, for the {two-state source} with objective $\bar{J}_\mathrm{LRA}^\mathrm{TS}$, Corollary~\ref{cor:opt_P1} gives a closed-form characterization of the {extended-objective maximizer, requiring $\mathcal O(1)$ arithmetic operations.}

\section{Numerical and Simulation Results}
In this section, we first validate the analytical results using Monte Carlo simulations with horizon $T=50,000$, averaged over $1,000$ independent realizations. We then show how the proposed joint-event metric differs from conventional separate-marginal formulations, and reveal structural behaviors {that marginal objectives alone do not capture}. Finally, we demonstrate a system-level implication of the framework through a geofencing example, where {prescribed \ac{lra} requirements are translated into spatial constraints on eavesdropper locations}.
\begin{figure*}[t]
\begin{centering}
\includegraphics[width=1\textwidth]{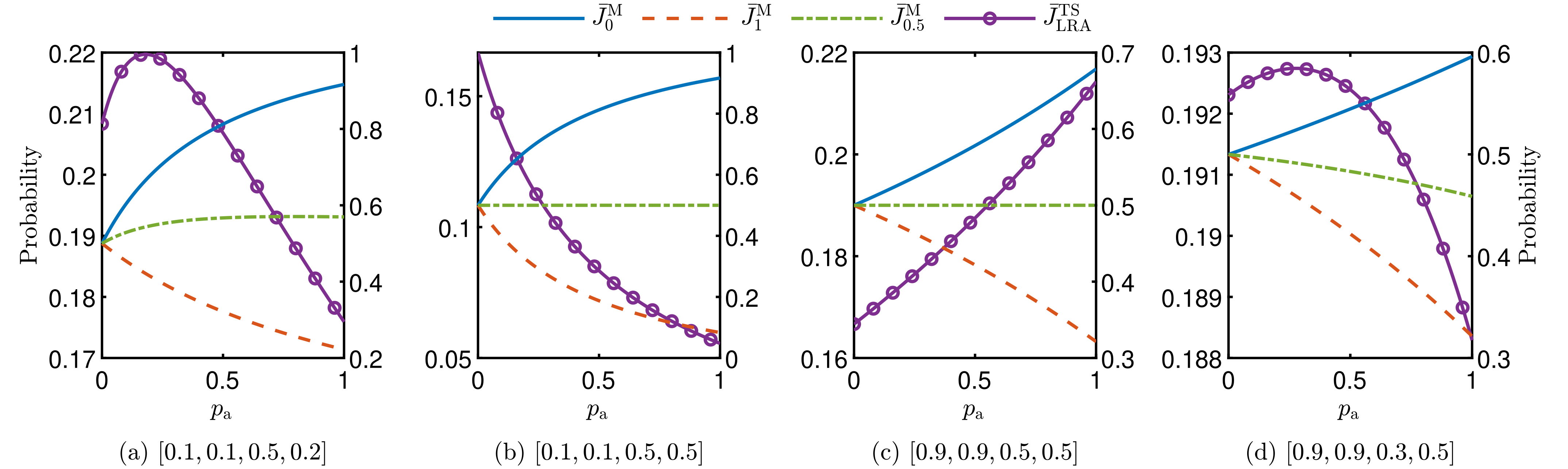}
\par\end{centering}
\caption{\label{fig:P_over_palpha}{\ac{lra} probability} $\bar{J}_\mathrm{LRA}^\mathrm{TS}$, {marginal eavesdropper error probability} $\bar{J}_1^\mathrm{M}$, {marginal legitimate accuracy} $\bar{J}_0^\mathrm{M}$, and the balance metric $\bar{J}_{0.5}^\mathrm{M}$ versus the transmission probability under different settings. We plot the metric $\bar{J}_\mathrm{LRA}^\mathrm{TS}$ on the left y-axis, while the other metrics share the right y-axis. The title of each subfigure indicates the corresponding parameter setting $[p, q, p_\mathrm{h}^\mathrm{b}, p_\mathrm{h}^\mathrm{e}]$.}
\end{figure*}

\begin{figure*}[t]
\begin{centering}
\includegraphics[width=1\textwidth]{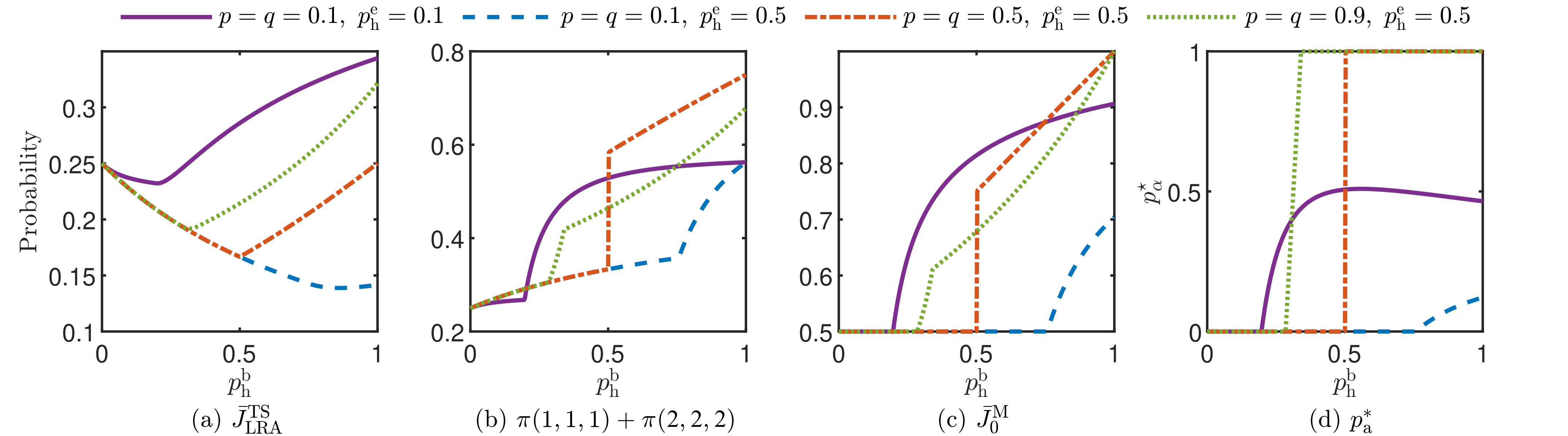}
\par\end{centering}
\caption{\label{fig:P_over_ps}{\ac{lra} probability} $\bar{J}_\mathrm{LRA}^\mathrm{TS}=\pi(1,1,2)+\pi(2,2,1)$ in (a), {the probability of simultaneous correct reconstruction} $\pi(1,1,1)+\pi(2,2,2)$ in (b), and marginal accuracy $\bar{J}_0^\mathrm{M}=\pi(1,1,2)+\pi(2,2,1)+\pi(1,1,1)+\pi(2,2,2)$ in (c), evaluated under the optimal transmission probability $p_\mathrm{a}^*$ shown in (d), versus the Bob-side channel success probability $p_\mathrm{h}^\mathrm{b}$.}
\end{figure*}

\subsubsection{Analytical Validation}
Fig.~\ref{fig:Monto_carlo} shows the \ac{lra} probability and \ac{lra} persistence versus the {transmission} probability $p_\mathrm{a}$ for both {two-state and uniform-transition} sources, by three evaluation methods. First, closed-form evaluation directly computes the four metrics using Corollaries~\ref{cor:lra_probability_TS}-\ref{cor:lra_persistence_ut}, corresponding to Figs.~\ref{fig:Monto_carlo}(a)-\ref{fig:Monto_carlo}(d). 
Second, matrix-form evaluation {numerically solves the stationary equations for the transition matrix} in Lemma~\ref{lem:mc_3d}, and then computes the metrics using Theorem~\ref{thm:stationary_metrics}. 
Third, Monte Carlo simulation repeatedly simulates the system dynamics under the same parameter settings and computes empirical averages according to~\eqref{eq:avg_reconstruction_advantage} and~\eqref{eq:avg_reconstruction_advantage_duration}.

The three methods are in {close} agreement across all four subfigures. This validates the proposed system-dynamics characterization, the matrix-form metric evaluation, and the closed-form expressions for the {structured source models}.
Moreover, the grid-search optima coincide with the analytical optima, {consistent with Theorem~\ref{thm:polynomial_optimal_pa} and the resulting polynomial-root-based optimization procedure.}

Figure~\ref{fig:Monto_carlo} further shows that different reconstruction-advantage metrics lead to different optimal {transmission} probabilities. For the {two-state source}, the {transmission} probabilities that maximize the \ac{lra} probability and persistence are approximately $0.477$ and $0.17$, respectively. Using the probability-optimal {transmission} probability when persistence is the actual objective not only increases the transmission rate but also reduces persistence from $2.68$ to $2.35$, an $12.4\%$ loss. The same effect holds for the {uniform-transition source}: using persistence-optimal $p_\mathrm a=1$ instead of the probability-optimal $p_\mathrm a=0.612$ consumes more resources while reducing the \ac{lra} probability from $0.549$ to $0.526$, an approximately $4.10\%$ loss. Therefore, the transmission probability should be selected based on {whether the objective is to maximize the occurrence or persistence of joint accuracy and security}.

For brevity, the remaining results focus on \ac{lra} probability for the two-state sources.

\begin{figure*}
\begin{centering}
\includegraphics[width=1\textwidth]{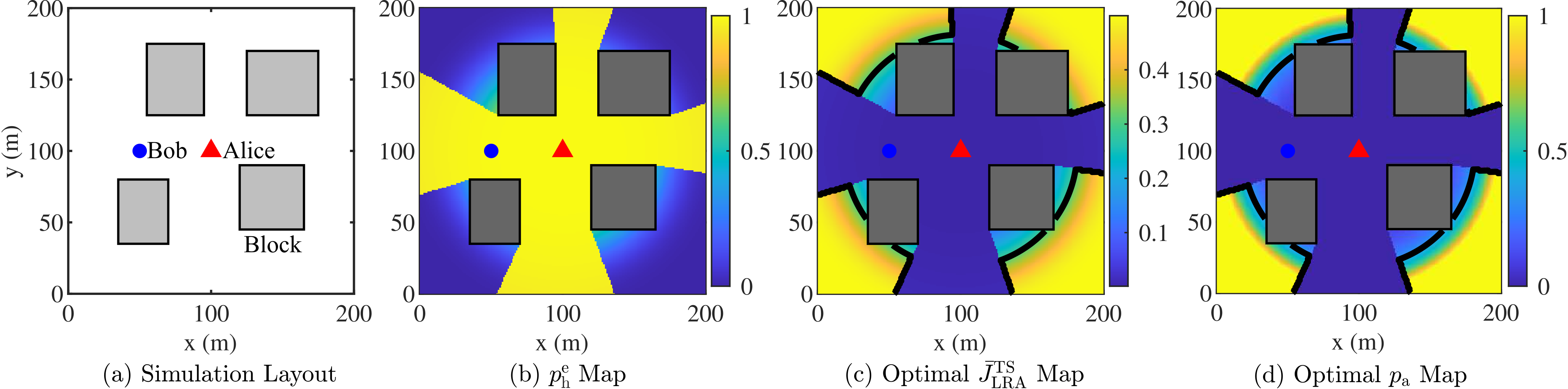}
\par\end{centering}
\caption{\label{fig:geofencing} Simulation layout and spatial maps {for candidate eavesdropper locations}, for the {two-state} source with $p=0.1$ and $q=0.1$. (a) shows the simulation layout; (b) shows Eve's channel success-probability map; (c) shows the corresponding optimal $\bar{J}_{\text{LRA}}^\mathrm{TS}$ map; and (d) shows the corresponding optimal transmission-probability $p_\mathrm a$ map. The black contour in (c) and (d) represents the geofencing boundary induced by the threshold $\tau=0.3$.}
\end{figure*}
\subsubsection{Comparison with Marginal Analysis}
For comparison, we consider a marginal-analysis baseline, which {represents accuracy and security through a weighted sum of legitimate reconstruction success and eavesdropper reconstruction failure probabilities}. Define the {legitimate-receiver accuracy event} and the 
{eavesdropper error event} as
$
J_{\mathrm A}^{\mathrm M}[t]
\triangleq
\mathbbm 1\{\hat X_t^{\mathrm b}=X_t\}$, $\quad J_{\mathrm S}^{\mathrm M}[t]
\triangleq
\mathbbm 1\{\hat X_t^{\mathrm e}\neq X_t\}
$.
Their corresponding long-term average probabilities are defined similarly to \eqref{eq:avg_reconstruction_advantage}, denoted by
$\bar J_{\mathrm A}^{\mathrm M}$ and
$\bar J_{\mathrm S}^{\mathrm M}$, and can be obtained by Theorem~\ref{thm:stationary_metrics}, or equivalently, from the marginal reconstruction analysis in~\cite{SalKouPap:J24}. The marginal-analysis baselines are then defined as
$
\bar J_{\omega}^{\mathrm M}
=
(1-\omega)\bar J_{\mathrm A}^{\mathrm M}
+
\omega \bar J_{\mathrm S}^{\mathrm M},
$
where $\omega\in[0,1]$. In particular, $\bar{J}_0^{\mathrm M}$ reduces to the pure accuracy metric, while $\bar{J}_1^{\mathrm M}$ reduces to the pure security metric.

Fig.~\ref{fig:P_over_palpha} shows that the optimal transmission probability for maximizing $\bar{J}_\mathrm{LRA}^\mathrm{TS}$ is environment-dependent, unlike the marginal metrics $\bar{J}_0^{\mathrm M}$ and $\bar{J}_1^{\mathrm M}$, whose optima are always attained at $p_{\mathrm a}=1$ and $p_\mathrm a\to0$, respectively. In Fig.~\ref{fig:P_over_palpha}a, although the legitimate channel is better than the eavesdropping channel, transmitting more frequently is not always beneficial: a larger $p_{\mathrm a}$ improves legitimate accuracy but also improves Eve’s reconstruction, and excessive transmissions eventually reduce the probability of being simultaneously secure and accurate. In Fig.~\ref{fig:P_over_palpha}d, by contrast, even when the eavesdropping channel is better than the legitimate channel, increasing $p_\mathrm a$ to a proper level can still improve $\bar{J}_\mathrm{LRA}^\mathrm{TS}$, {despite the accompanying improvement in Eve's marginal reconstruction accuracy}.

Fig.~\ref{fig:P_over_palpha} also shows that the secure remote reconstruction cannot be reduced to a weighted combination of two separate marginal metrics. In Fig.~\ref{fig:P_over_palpha}a, {the balanced metric $\bar J_{0.5}^{\mathrm M}$ selects $p_{\mathrm a}\approx0.8$, under which the actual \ac{lra} probability is approximately $0.189$. Directly optimizing \ac{lra} instead yields approximately $0.22$ at $p_{\mathrm a}\approx0.18$. Thus, the marginally optimized policy transmits more frequently while achieving an approximately $14\%$ lower \ac{lra} probability.
Figs.~\ref{fig:P_over_palpha}(b) and~\ref{fig:P_over_palpha}(c) further illustrate the symmetric-channel case with $p_{\mathrm h}^{b}=p_{\mathrm h}^{e}$, where $\bar J_{0.5}^{\mathrm M}$ is insensitive to $p_{\mathrm a}$. The two receivers have equal marginal reconstruction accuracy, so the equally weighted sum of Bob’s success probability and Eve’s error probability remains $1/2$. Nevertheless, the preferred transmission probability for $\bar J_{\mathrm{LRA}}^{\mathrm{TS}}$ varies significantly with the source dynamics, consistent with Remark~\ref{rem:extrem_analysis_opt}: persistent sources favor increasingly sparse transmission, whereas alternating sources favor full transmission. The weighted marginal objective therefore fails to distinguish policies whose \ac{lra} probabilities differ.}

\subsubsection{Effect of Legitimate-Channel Quality}

Fig.~\ref{fig:P_over_ps} studies the effect of Bob’s channel success probability $p_\mathrm{h}^{\mathrm b}$ on \ac{lra} probability under the optimal transmission probability $p_\mathrm{a}^*$.
As shown in Fig.~\ref{fig:P_over_ps}a, improving the legitimate channel does not necessarily increase the \ac{lra} probability monotonically.
A larger $p_\mathrm{h}^{\mathrm b}$ improves Bob’s marginal reconstruction accuracy, as shown in Fig.~\ref{fig:P_over_ps}c. However, part of this accuracy gain may occur in states where Eve is also accurate, thereby increasing the {probability of simultaneous correct reconstruction} in Fig.~\ref{fig:P_over_ps}b.
Hence, the gain in Bob’s accuracy can initially be offset by a larger non-security component, causing the \ac{lra} probability to first decrease and then increase.

Second, whether improving $p_\mathrm{h}^{\mathrm b}$ yields a meaningful secure reconstruction gain depends on Eve’s channel quality and the source dynamics.
When Eve’s channel is strong, \emph{e.g.}, $p=q=0.1$ and $p_\mathrm{h}^{\mathrm e}=0.5$, increasing $p_\mathrm{h}^{\mathrm b}$ may primarily synchronize the two reconstructions rather than enlarge their separation.
In such cases, the \ac{lra} probability improves more effectively by weakening Eve’s channel, \emph{e.g.}, setting $p_\mathrm{h}^{\mathrm e}=0.1$, or by accelerating the source evolution, \emph{e.g.}, setting $p=q=0.9$.
The former reduces Eve’s access to fresh updates, whereas the latter makes outdated reconstructions less reliable.

{Third, improving legitimate-channel reliability does not necessarily call for more frequent transmission. As shown in Fig.~\ref{fig:P_over_ps}(d), for $p=q=0.1$ and $p_{\mathrm h}^{e}=0.1$, the \ac{lra}-optimal transmission probability first increases and then decreases with $p_{\mathrm h}^{b}$. A more reliable Bob-side channel can support accurate reconstruction with fewer transmissions, reducing the need for additional updates that may also benefit Eve. The preferred policy also depends on source dynamics and Eve’s channel: with $p_{\mathrm h}^{e}=0.5$, the persistent source ($p=q=0.1$) favors sparse transmission over a broad range of $p_{\mathrm h}^{b}$, whereas the frequently alternating source ($p=q=0.9$) favors full transmission at much lower $p_{\mathrm h}^{b}$. Thus, transmission frequency should be selected jointly according to source dynamics and both receiver channels to maximize \ac{lra} probability.}

\subsubsection{\ac{lra}-Induced Geofencing}
To further evaluate the proposed framework in a representative practical scenario, we consider the urban layout shown in Fig.~\ref{fig:geofencing}a, where the transmitter (Alice) is located at the center and the legitimate receiver (Bob) is placed at a fixed location, while the eavesdropper (Eve) is assumed to be able to appear anywhere in the area. The large-scale propagation is modeled by the 3GPP Urban Micro (UMi) path-loss model in~\cite{TR36814}, where the \ac{los} and \ac{nlos} links are given by
\[
\mathrm{PL}_j=
\begin{cases}
28.0+22.0\log_{10}(d_j)+20\log_{10}(f_c), & \text{LOS},\\
22.7+36.7\log_{10}(d_j)+26\log_{10}(f_c), & \text{NLOS},
\end{cases}
\]
and the small-scale fading is modeled as Rayleigh fading. Based on this channel model, the channel success probability is defined as the probability that the instantaneous channel quality exceeds a prescribed threshold, {\em i.e.}, $\Pr(\gamma>\gamma_{\mathrm{th}})$, which yields the spatial success-probability map shown in Fig.~\ref{fig:geofencing}b. Then, by applying Corollary~\ref{cor:opt_P1} pointwise over space, we obtain the optimal \ac{lra} probability and the corresponding transmission probability, as shown in Figs.~\ref{fig:geofencing}c and~\ref{fig:geofencing}d.

Based on the resulting $\bar{J}_{\mathrm{LRA}}^{\mathrm{TS}*}$ map, we further define a geofencing region by imposing the constraint $\bar{J}_{\mathrm{LRA}}^{\mathrm{TS}*}<\tau$, where $\tau$ is a prescribed {\ac{lra}-probability} threshold. The resulting geofencing boundary is shown by the black contour in Figs.~\ref{fig:geofencing}c and~\ref{fig:geofencing}d. This contour bounds the region in which even the optimal RS transmission policy cannot meet the prescribed \ac{lra} requirement under the assumed \ac{zoh} reconstruction model. It is also observed that the geofencing region becomes relatively large in the left and right directions, where the lack of blockage leads to more favorable propagation for Eve. This observation motivates further study on how to shrink the geofencing region, for example by introducing additional blockages or by designing proactive interference or jamming strategies.

\section{Conclusion}
This work developed an analytical framework for legitimate reconstruction advantage in semantics-aware remote monitoring under eavesdropping. The proposed LRA metrics characterize the probability and persistence of correct reconstruction at the intended receiver coupled with incorrect reconstruction at the eavesdropper. A joint stationary analysis yielded closed-form expressions for general finite-state ergodic Markov sources, with explicit algebraic specializations enabling efficient transmission-probability optimization. The results show that achieving legitimate advantage more frequently and sustaining it can favor different source dynamics and transmission policies. For two-state sources, rare-transmission analysis reveals that greater transition asymmetry can increase persistence while reducing probability when the sum of transition probabilities is fixed. Moreover, optimizing receiver-wise objectives can select policies that are suboptimal for LRA, and improvements in legitimate reconstruction accuracy need not increase the probability of legitimate advantage. These findings emphasize that transmission decisions should account for both receivers' reconstruction outcomes and the temporal requirements of the monitoring application. The geofencing example further translates prescribed LRA requirements into location-dependent spatial boundaries.
These conclusions hold under randomized stationary transmission and prescribed ZOH reconstruction at both receivers. An important direction for future work is to characterize how source-aware adversarial inference changes the preferred transmission policies and the resulting spatial boundaries.
\appendices{}

\section{Proof of Lemma~\ref{lem:mc_3d}}\label{sec:proof_mc_3d}
{Since the state space $\mathcal N^3$ is finite, $\{S_t\}_{t\in\mathcal T}$ is a finite-state process.}
{By the Markov property of the source and its independence from the transmission and channel-state processes, the conditional distribution of $X_{t+1}$ given $(S_1,\ldots,S_t)$ depends only on $X_t$. }Moreover, conditioned on $S_t$ and $X_{t+1}$, the state reconstruction at receiver $r\in\mathcal R$ evolves according to
\begin{equation*} 
\hat{X}_{t+1}^r= 
\begin{cases} 
X_{t+1}, & A_{t+1} H_{t+1}^r=1,\\ 
\hat{X}_{t}^r,  & A_{t+1} H_{t+1}^r=0, \end{cases}
\end{equation*} 
{under the \ac{zoh} reconstruction rule in Section~\ref{subsec:remote_estimation}. The random vector $(A_{t+1},H_{t+1}^{b},H_{t+1}^{e})$ is independent of $(S_1,\ldots,S_t,X_{t+1})$.} Consequently, conditioned on $S_t$, the distribution of $S_{t+1}$ is independent of all earlier states. Therefore,
\begin{equation*}
\Pr[S_{t+1}\mid S_1,\ldots,S_t]
=
\Pr[S_{t+1}\mid S_t],
\end{equation*}
which establishes the Markov property of $\{S_t\}_{t\in\mathcal T}$.

For any
$\mathbf{s}=(x,a,b)\in\mathcal N^3$ and $\mathbf{s}'=(x',a',b')\in\mathcal N^3$, define the one-step transition probability as
$
\mathbf{P}_{\mathbf{s},\mathbf{s}'}
\triangleq
\Pr[S_{t+1}=\mathbf{s}'\mid S_t=\mathbf{s}].
$
Conditioning on the next source state gives
\begin{align*}
\mathbf{P}_{\mathbf{s},\mathbf{s}'}
&=
\Pr[X_{t+1}=x'\mid X_t=x]\nonumber\\
&\quad\times
\Pr[
\hat X_{t+1}^{\mathrm b}=a',
\hat X_{t+1}^{\mathrm e}=b'
\mid S_t=\mathbf{s},X_{t+1}=x']\\
& = [\mathbf Q]_{x,x'} \Psi(\mathbf{s},\mathbf{s}').
\end{align*}
This establishes \eqref{eq:kernel}.

Finally, since $\mathbf Q$ and $\Psi(\mathbf{s},\mathbf{s}')$ are independent of time $t$, the induced process $\{S_t\}_{t\in\mathcal T}$ is time-homogeneous.

\section{Proof of Lemma~\ref{lem:ergodicity}}\label{sec:proof_ergodicity}
Since $\{S_t\}_{t\in\mathcal T}$ is a finite-state, time-homogeneous Markov chain, it suffices to establish {the existence of a unique recurrent class and its aperiodicity}.
These properties imply ergodicity {on that class} and the existence of a unique stationary distribution.

\subsubsection{Uniqueness of the Recurrent Class}
Consider an arbitrary state $\mathbf{s}=(x,a,b)$ and a diagonal state $\mathbf{s}'=(x',x',x')$ in $\mathcal N^3$. Since the source Markov chain is irreducible, there exists a positive integer $n$ such that $[\mathbf{Q}^{n}]_{x,x'}>0$. Let the source move from $x$ to $x'$ in $n$ slots. Suppose that both receivers update in the final slot. From the transition matrix in Lemma~\ref{lem:mc_3d}, it follows that
\begin{equation*}
[\mathbf P^n]_{s,s'}
\ge
[\mathbf Q^n]_{x,x'}\lambda_{11}>0,
\end{equation*}
where $\lambda_{11}=p_{\mathrm a}p_{\mathrm h}^{b}p_{\mathrm h}^{e}>0$ under the assumptions in Section~\ref{subsec:remote_estimation}. Accordingly, $\mathbf{s}'$ is reachable from $\mathbf{s}$ with positive probability. Since $\mathbf{s}$ is arbitrary, the diagonal state $\mathbf{s}'$ is reachable from every joint state.

Since the chain is finite, at least one recurrent class exists. Any such class must contain $\mathbf{s}'$, because $\mathbf{s}'$ is reachable from every state, and a recurrent class cannot be left once entered. If there were multiple recurrent classes, they would therefore share $\mathbf{s}'$, contradicting their disjointness. Hence, the recurrent class is unique.

\subsubsection{Aperiodicity}
For any source state $x\in\mathcal N$, define the set of positive-probability return times to source state $x$ as
\begin{equation*}
    \mathcal R_\mathbf{Q}(x)
    \triangleq
    \{n\geq 1:[\mathbf{Q}^n]_{x,x}>0\}.
\end{equation*}
Since the source Markov chain is aperiodic, the greatest common divisor of the return times in $\mathcal R_\mathbf{Q}(x)$ is 1, \emph{i.e.},
$
    \gcd \mathcal R_{\mathbf Q}(x)=1,\, \forall x\in\mathcal N .
$

Consider {a diagonal state $\mathbf s=(x,x,x)\in\mathcal N^3$, which belongs to the unique recurrent class}. Let
\begin{equation*}
    \mathcal R_{\mathbf P}(\mathbf s)
    \triangleq
    \{n\geq 1:[\mathbf P^n]_{\mathbf s,\mathbf s}>0\}
\end{equation*}
denote the return-time set of $\mathbf s$ under the induced joint transition matrix
$\mathbf P$.

For every $n\in\mathcal R_\mathbf{Q}(x)$, an $n$-slot return of the source combined with {a simultaneous reception at both receivers in the final slot} returns the composite state $\mathbf s =(x,x,x)$ to itself. Hence,
\begin{equation*}
    [\mathbf P^n]_{\mathbf s,\mathbf s}
    \geq
    [\mathbf Q^n]_{x,x}\lambda_{11}
    >0 ,
\end{equation*}
which implies
$
    \mathcal R_\mathbf Q(x)\subseteq \mathcal R_{\mathbf P}(\mathbf s).
$
Combining with $\gcd \mathcal R_\mathbf Q(x)=1$, the return-time set $\mathcal R_{\mathbf P}(\mathbf s)$ also has greatest common divisor one, \emph{i.e.},
$
    \gcd \mathcal R_{\mathbf P}(\mathbf s)=1.
$
{Since all states in a communicating class have the same period, the unique recurrent class is aperiodic.}

{Therefore, the induced process is ergodic on its unique recurrent class and admits a unique stationary distribution, to which its distribution converges from any initial state.}

\section{Proof of Theorem~\ref{thm:stationary}} \label{sec:proof_thm_stationary}
We start by conditioning on the joint age process $(\Theta^\mathrm{b}_t,\Theta^\mathrm{e}_t)\in\mathbb Z_+^2$. By the law of total probability, we have
\begin{align}
    \pi(\mathbf{s})
    &=
    \lim_{t\to\infty}
    \sum_{i=0}^\infty \sum_{j=0}^\infty
     \Pr(
        \Theta_t^{\mathrm b}=i,
        \Theta_t^{\mathrm e}=j
    ) \nonumber \\
    & \times
    \Pr(
        X_t=x,\,
        \hat X_t^{\mathrm b}=a,
        \hat X_t^{\mathrm e}=b
        \mid
        \Theta_t^{\mathrm b}=i,
        \Theta_t^{\mathrm e}=j
    ).
    \label{eq:stationary_c_age}
\end{align}

Under the state reconstruction model induced by the age variables, as defined in  \eqref{eq:def_estimator_receiver}, the second term in \eqref{eq:stationary_c_age} becomes
\begin{align*}
    &\Pr(
        X_t=x,\,
        \hat X_t^{\mathrm b}=a, \hat X_t^{\mathrm e}=b  \mid  \Theta_t^{\mathrm b}=i, \Theta_t^{\mathrm e}=j)\\
    & \quad = \Pr(X_t=x, X_{t-i}=a, X_{t-j}=b \mid \Theta_t^{\mathrm b}=i, \Theta_t^{\mathrm e}=j)\\
    & \quad = \Pr(X_t=x, X_{t-i}=a, X_{t-j}=b),
\end{align*}
where the last equality holds because the age process is fully determined by $\{A_t,H^\mathrm{b}_t,H^\mathrm{e}_t\}_{t\in\mathcal T}$, as defined in \eqref{eq:def_aoi_receiver} and \eqref{eq:def_estimator_receiver}, and is independent of the source process $\{X_t\}_{t\in\mathcal T}$.

Since $\{X_t\}_{t\in\mathcal T}$ is a finite-state ergodic Markov chain, and the joint age process $(\Theta^\mathrm{b}_t,\Theta^\mathrm{e}_t)$ {has a unique positive-recurrent, aperiodic class containing} $(0,0)$ due to $\lambda_{11}>0$, the following limits exist.
\begin{equation*}
    \rho(i,j) \triangleq \lim_{t\to\infty} \Pr (\Theta_t^{\mathrm b}=i, \Theta_t^{\mathrm e}=j),
\end{equation*}
\begin{equation*}
    \phi(x,a,b,i,j) \triangleq \lim_{t\to\infty} \Pr(X_t=x, X_{t-i}=a, X_{t-j}=b).
\end{equation*}
Thus, the joint stationary probability becomes
\begin{equation}
    \pi(\mathbf{s}) = \sum_{i=0}^\infty \sum_{j=0}^\infty \rho(i,j)\phi(x,a,b,i,j). \label{eq:stationary_d_limit}
\end{equation}

Next, we first derive the limiting distributions $\phi(x,a,b,i,j)$ and $\rho(i,j)$ separately, and then combine them to obtain the joint stationary distribution $\pi(\mathbf{s})$.

\subsubsection{Stationary Distribution of Joint Age Process}
The stationary distribution of the joint age process $\rho(i,j)$ is characterized by three cases.

\textit{Case 1: $i = j$ (Synchronized updates).}
This case corresponds to the event that the most recent successful update is received simultaneously by both Bob and Eve. The age pair is reset to $(0,0)$ with probability $\lambda_{11}$, and then increases synchronously only if neither receiver successfully receives a new update in the subsequent slots. Since a joint failure occurs with probability $\lambda_{00}$ in each slot, we obtain
\begin{equation}
    \rho(i,i)  = \lambda_{11} (\lambda_{00})^i,\,i\ge0. \label{eq:stationary_a_1}
\end{equation}

\textit{Case 2: $i < j$ (Bob is fresher).}
Let $\Delta = j - i > 0$. Then, at time $t-i$, the system must be in a boundary state $(0, \Delta)$, meaning that Bob has just received a successful update while Eve’s age exceeds Bob’s by $\Delta$. From that time to time $t$, the age pair can evolve from $(0, \Delta)$ to $(i,j)$ only if $i$ consecutive joint failures occur. Therefore, 
\begin{equation}
    \rho(i,j) = \rho(0, \Delta) (\lambda_{00})^i.
    \label{eq:mid_age_pi}
\end{equation}
It remains to determine the boundary term $\rho(0, \Delta)$. The event $(\Theta_{t-i}^\mathrm{b},\Theta_{t-i}^\mathrm{e})=(0, \Delta)$ occurs when Bob successfully receives an update while Eve does not, and Eve’s age in the previous slot equals $\Delta-1$. The former occurs with probability $\lambda_{10}$ {independently of the previous age}. Moreover, Eve successfully receives an update in each slot with marginal probability $p_\mathrm{e}=\lambda_{11}+\lambda_{01}$, so its marginal age distribution is geometric 
\begin{equation*}
    \Pr(\Theta^e = k) = p_\mathrm{e} (1-p_\mathrm{e})^k,\, k\ge 0
\end{equation*}
Hence,
\begin{equation*}
    \rho(0, \Delta) = \lambda_{10}  p_\mathrm{e} (1 - p_\mathrm{e})^{\Delta - 1} .
\end{equation*}
Substituting $\rho(0, \Delta)$ into \eqref{eq:mid_age_pi}, we have 
\begin{equation}
    \rho(i,j) = \lambda_{10} p_\mathrm{e} (1 - p_\mathrm{e})^{j - i - 1} (\lambda_{00})^i .\label{eq:stationary_a_2}
\end{equation}

\textit{Case 3: $i > j$ (Eve is fresher).}
This case is symmetric to Case 2, with Bob and Eve interchanged. Hence,
\begin{equation}
    \rho(i,j) = \lambda_{01} p_\mathrm{b}  (1 - p_\mathrm{b} )^{i - j- 1} (\lambda_{00})^j, \label{eq:stationary_a_3}
\end{equation}
where $p_\mathrm{b} = \lambda_{11} + \lambda_{10}$ denotes the marginal success probability at Bob.

\subsubsection{Stationary Three-Time Joint Distribution}
The stationary three-time joint distribution $\phi(x,a,b,i,j)$ is also characterized by three cases.

\textit{Case 1: $i = j$ (Synchronized updates).}
In this case, the time instants $t-i$ and $t-j$ coincide. Hence, the event has nonzero probability only if $a=b$. Under stationarity, the source is in state $a$ with probability $[\mathbf v]_a$, and then evolves from $a$ to $x$ over $i$ steps according to $\mathbf Q^i$. Therefore,
\begin{equation}
    \phi(x,a,b,i,i) = \mathbbm{1}(a=b) [\mathbf v]_a [\mathbf Q^i]_{a,x}. \label{eq:stationary_3_1}
\end{equation}

\textit{Case 2: $i < j$ (Bob is fresher).}
Here, $t-j<t-i\le t$. The process follows the path $\xrightarrow{\infty}b \xrightarrow{j-i} a \xrightarrow{i} x$. By the Markov property and stationarity, we have
\begin{equation}
    \phi(x,a,b,i,j) = [\mathbf v]_b [\mathbf Q^{j-i}]_{b,a} [\mathbf Q^i]_{a,x}. \label{eq:stationary_3_2}
\end{equation}

\textit{Case 3: $i > j$ (Eve is fresher).}
Here, $t-i<t-j\le t$. The process follows the path $\xrightarrow{\infty}a \xrightarrow{i-j} b \xrightarrow{j} x$. Similarly, we have
\begin{equation}
    \phi(x,a,b,i,j) = [\mathbf v]_a [\mathbf Q^{i-j}]_{a,b} [\mathbf Q^j]_{b,x}. \label{eq:stationary_3_3}
\end{equation}

\subsubsection{Stationary Distribution of Induced Process}
Having characterized the stationary age distribution $\rho(i,j)$ and the stationary three-time joint distribution, we now derive the stationary distribution $\pi(\mathbf{s})$. As above, the derivation is carried out by distinguishing three cases.

\textit{Case 1: $i = j$ (Synchronized updates).}
In this case, the two receivers share the same freshest update, and therefore $a=b$. Substituting \eqref{eq:stationary_a_1} and \eqref{eq:stationary_3_1} into \eqref{eq:stationary_d_limit}, we have
\begin{align*}
    \pi_1(\mathbf{s}) &\triangleq \sum_{i=0}^\infty \mathbbm{1}(a=b) [\mathbf v]_a [\mathbf Q^i]_{a,x} \lambda_{11} (\lambda_{00})^i \nonumber \\
    &= \mathbbm{1}(a=b) \lambda_{11} [\mathbf v]_a \left[ \sum_{i=0}^\infty (\lambda_{00} \mathbf Q)^i \right]_{a,x} \nonumber \\
    &= \mathbbm{1}(a=b) \lambda_{11} [\mathbf v]_a \left[ (\mathbf{I}_N - \lambda_{00}\mathbf Q)^{-1} \right]_{a,x}.
\end{align*}

\textit{Case 2: $i < j$ (Bob is fresher).}
Let $\Delta = j - i \ge 1$. Substituting \eqref{eq:stationary_a_2} and \eqref{eq:stationary_3_2} into \eqref{eq:stationary_d_limit}, we have
\begin{align*}
    & \pi_2(\mathbf{s})\nonumber\\ 
    &\triangleq \sum_{i=0}^\infty \sum_{\Delta=1}^\infty [\mathbf v]_b [\mathbf Q^\Delta]_{b,a} [\mathbf Q^i]_{a,x}  \lambda_{10} p_\mathrm{e} (1 - p_\mathrm{e})^{\Delta - 1} (\lambda_{00})^i \nonumber \\
    &= [\mathbf v]_b \lambda_{10} p_\mathrm{e} \left[ \sum_{\Delta=1}^\infty (1-p_\mathrm{e})^{\Delta-1} \mathbf Q^\Delta \right]_{b,a} \left[ \sum_{i=0}^\infty (\lambda_{00}\mathbf Q)^i \right]_{a,x} \nonumber \\
    & = [\mathbf v]_b \lambda_{10} p_\mathrm{e} \left[ \mathbf Q(\mathbf{I}_N  - (1-p_\mathrm{e})\mathbf Q)^{-1} \right]_{b,a} \left[ (\mathbf{I}_N  - \lambda_{00}\mathbf Q)^{-1} \right]_{a,x}.
\end{align*}

\textit{Case 3: $i > j$ (Eve is fresher).}
By symmetry, let $\Delta = i - j \ge 1$. Substituting \eqref{eq:stationary_a_3} and \eqref{eq:stationary_3_3} into \eqref{eq:stationary_d_limit}, we have
\begin{align*}
    \pi_3(\mathbf{s}) &= [\mathbf{v}]_a \lambda_{01} p_\mathrm{b} \left[ \mathbf Q(\mathbf I_N - (1-p_\mathrm{b})\mathbf Q)^{-1} \right]_{a,b} \nonumber \\
    &\quad \times \left[ (\mathbf I - \lambda_{00}\mathbf Q)^{-1} \right]_{b,x}.
\end{align*}

Combining the three cases above yields \eqref{eq:stationary_distribution}.

\section{Proof of Theorem~\ref{thm:stationary_metrics} }\label{sec:proof_stationary_metrics}
The \ac{lra} event occurs in slot $t$ if and only if $S_t\in\mathcal S_\mathrm{LRA}$. Hence, by the definition of $\bar J_\mathrm{LRA}$ in \eqref{eq:avg_reconstruction_advantage},
\begin{equation*}
    \bar J_\mathrm{LRA} \stackrel{(a)}{=}
    \sum_{\mathbf s\in\mathcal S_\mathrm{LRA}}
    \pi(\mathbf s)
    =
    \mathbf 1_{N_\mathrm{LRA}}^{\mathsf T}\boldsymbol\pi_\mathrm{LRA}.
\end{equation*}
Here, $(a)$ follows from Lemma~\ref{lem:ergodicity}.

Next, consider the persistence metric. Since $D_\mathrm{LRA}[t]$ is nonnegative
integer-valued, it can be rewritten as
\begin{align}
    D_\mathrm{LRA}[t]
     &=     \sum_{d=1}^{t}
    \mathbbm 1\{D_\mathrm{LRA}[t]\geq d\}\nonumber\\
     &= \sum_{d=1}^{t}
    \mathbbm 1\{
    S_{t-d+1},\ldots,S_t\in\mathcal S_\mathrm{LRA}
    \}.
    \label{eq:D_xi_sum_form}
\end{align}
The second equation holds because $D_\mathrm{LRA}[t]\geq d$ if and only if the \ac{lra} event holds throughout the last $d$ slots.
Substituting \eqref{eq:D_xi_sum_form} into the definition of $\bar D_\mathrm{LRA}$ gives
\begin{align}
    \bar D_\mathrm{LRA}
    &=
    \liminf_{T\to\infty}
    \frac{1}{T}
    \sum_{t=1}^{T}
    \sum_{d=1}^{t}
    \Pr\{
        S_{t-d+1},\ldots,S_t\in\mathcal S_\mathrm{LRA}
    \}.
    \label{eq:proof_D_definition}
\end{align}
For any fixed $d$, the Markov property gives
\begin{align*}
    &\Pr\{
        S_{t-d+1},\ldots,S_t\in\mathcal S_\mathrm{LRA}
    \}
    \nonumber\\
    &=
    \sum_{\mathbf s_0,\ldots,\mathbf s_{d-1}\in\mathcal S_\mathrm{LRA}}
    \Pr\{
    S_{t-d+1}=\mathbf s_0,\ldots,S_t=\mathbf s_{d-1}
    \}
    \nonumber\\
    &=
    \sum_{\mathbf s_0,\ldots,\mathbf s_{d-1}\in\mathcal S_\mathrm{LRA}}
    \Pr\{S_{t-d+1}=\mathbf s_0\}
    \mathbf P_{\mathbf s_0,\mathbf s_1}
    \ldots
    \mathbf P_{\mathbf s_{d-2},\mathbf s_{d-1}},
    \label{eq:proof_D_path_expansion}
\end{align*}
{By Lemma~\ref{lem:ergodicity}, for each fixed $d$, the time-averaged distribution of $S_{t-d+1}$ converges to the stationary distribution. Moreover, an \ac{lra} run cannot continue through a successful reception at Eve, so
\begin{equation*}
    \Pr\{S_{t-d+1},\ldots,S_t\in\mathcal S_{\mathrm{LRA}}\}
    \le (1-p_e)^{d-1},
    \, 1\le d\le t.
\end{equation*}
Since $p_e>0$, this bound is summable over $d$ and justifies interchanging the limit and the sum in~\eqref{eq:proof_D_definition}.} Hence,
\begin{align*}
    \bar D_\mathrm{LRA}
    &=
    \sum_{d=1}^{\infty}
    \sum_{\mathbf s_0,\ldots,\mathbf s_{d-1}\in\mathcal S_\mathrm{LRA}}
    \pi(\mathbf s_0)
    \mathbf P_{\mathbf s_0,\mathbf s_1}
    \ldots
    \mathbf P_{\mathbf s_{d-2},\mathbf s_{d-1}}
    \nonumber\\
    &=
    \sum_{d=1}^{\infty}
    \mathbf 1_{N_\mathrm{LRA}}^{\mathsf T}
    \left(\mathbf P_\mathrm{LRA}^{\mathsf T}\right)^{d-1}
    \boldsymbol\pi_\mathrm{LRA}.
\end{align*}
{Similarly, since a successful reception at Eve precludes LRA, each row sum of $\mathbf P_{\mathrm{LRA}}$ is at most $1-p_e$. Thus,
\begin{equation*}
\rho(\mathbf P_{\mathrm{LRA}})
\le
\|\mathbf P_{\mathrm{LRA}}\|_{\infty}
\le
1-p_e<1,
\end{equation*}}
and the Neumann series gives 
$
    \sum_{d=0}^{\infty}(\mathbf P_\mathrm{LRA}^{\mathsf T})^d
    =
    (\mathbf I_{N_\mathrm{LRA}}-\mathbf P_\mathrm{LRA}^{\mathsf T})^{-1}.
$
Therefore,
\begin{equation*}
    \bar D_\mathrm{LRA}
    =
    \mathbf 1_{N_\mathrm{LRA}}^{\mathsf T}
    \left(\mathbf I_{N_\mathrm{LRA}}-\mathbf P_\mathrm{LRA}^{\mathsf T}\right)^{-1}
    \boldsymbol\pi_\mathrm{LRA}.
\end{equation*}

\section{Proof of Corollary~\ref{cor:lra_probability_ut}}\label{sec:proof_lra_probability_ut}
The transition matrix $\mathbf Q^\mathrm{UT}$ in \eqref{eq:uniform_transition_Q} can be written as
\begin{equation*}
    \mathbf Q^{\mathrm{UT}}
    =
    d\mathbf I_N+\eta\mathbf 1_N\mathbf 1_N^{\mathsf T},
\end{equation*}
where $d= \rho-\eta=1-N\eta$. For any $c\in[0,1)$, the Sherman-Morrison formula gives
\begin{equation*}
    \left(\mathbf I_N-c\mathbf Q^{\mathrm{UT}}\right)^{-1}
    =
    \frac{1}{1-cd}\mathbf I_N
    +
    \frac{c\eta}{(1-cd)(1-c)}
    \mathbf 1_N\mathbf 1_N^{\mathsf T},
\end{equation*}
and
\begin{equation*}
\mathbf Q^{\mathrm{UT}}\left(\mathbf I_N-c\mathbf Q^{\mathrm{UT}}\right)^{-1}
=
\frac{d}{1-cd}\mathbf I_N
+
\frac{\eta}{(1-cd)(1-c)}
\mathbf 1_N\mathbf 1_N^{\mathsf T}.
\end{equation*}

Setting $c=\lambda_{00}$, denote the common diagonal and off-diagonal entries of $\left(\mathbf I_N-c\mathbf Q^{\mathrm{UT}}\right)^{-1}$ by
    \begin{equation*}
        \psi_\mathrm{d} = \frac{1+\lambda_{00}(\eta-1)}{(1-\lambda_{00}d)(1-\lambda_{00})},\,
        \psi_\mathrm{o}  = \frac{\lambda_{00}\eta}{(1-\lambda_{00}d)(1-\lambda_{00})}.
    \end{equation*}

The \ac{lra} event corresponds to $a=x$ and $b\neq x$. Therefore, for every
$b\neq x$, by Theorem~\ref{thm:stationary}, we have
\begin{align*}
    \pi_1^{\mathrm{UT}}(x,x,b) &= 0,\,
    \pi_2^{\mathrm{UT}}(x,x,b) = \frac{1}{N}
    \frac{\eta\lambda_{10}}{1-(1-p_{\mathrm e})d}
    \psi_\mathrm{d}, \nonumber \\
    \pi_3^{\mathrm{UT}}(x,x,b) &= \frac{1}{N}
    \frac{\eta\lambda_{01}}{1-(1-p_{\mathrm b})d}
    \psi_{\mathrm o}.
\end{align*}
{These probabilities are identical for all $N(N-1)$ \ac{lra} states.} Therefore, the long-term average \ac{lra} probability is 
\begin{align*}
    \bar J_{\mathrm{LRA}}^{\mathrm{UT}}
    &=
    \sum_{x\in\mathcal N ^\mathrm{UT}}
    \sum_{b\in\mathcal N^\mathrm{UT}\setminus\{x\}}
    \pi(x,x,b)
    =N(N-1)\pi(x,x,b)\nonumber \\
    &=\frac{(N-1)\eta\lambda_{10}\psi_{\mathrm d}}{1-(1-p_\mathrm{e})(\rho-\eta)}
        +
        \frac{(N-1)\eta\lambda_{01}\psi_{\mathrm o}}{1-(1-p_\mathrm{b})(\rho-\eta)}.
\end{align*}

\section{Proof of Corollary~\ref{cor:lra_persistence_ut}}\label{sec:proof_ut_lra_persistence}
Recall that $\mathcal S_{\mathrm{LRA}}^{\mathrm{UT}}$ contains $N_\mathrm{LRA}=N(N-1)$ states of the form $(x,x,b)$ with $b\neq x$.
By Lemma~\ref{lem:mc_3d}, the transition probability from $\mathbf s=(x,x,b)\in\mathcal S_{\mathrm{LRA}}^{\mathrm{UT}}$ to $\mathbf s'=(x',x',b')\in\mathcal S_{\mathrm{LRA}}^{\mathrm{UT}}$ is
\begin{equation} 
[\mathbf P_{\mathrm{LRA}}^{\mathrm{UT}}]_{s,s'} 
= 
\begin{cases} 
    \rho(1-p_{\mathrm e}), & x'=x,\ b'=b,\\ 
    \eta\lambda_{10}, & x'\in\mathcal N\setminus\{x,b\},\ b'=b,\\ 
    0, & \text{otherwise}. 
\end{cases} 
\label{eq:proof_ut_lra_transition_matrix} 
\end{equation}
{A successful reception at Eve makes its reconstruction correct and therefore precludes \ac{lra}. Hence, any transition within $\mathcal S_{\mathrm{LRA}}^{\mathrm{UT}}$ requires Eve to receive no update, so $b'=b$.}

It follows from \eqref{eq:proof_ut_lra_transition_matrix} that every row of
$\mathbf P_{\mathrm{LRA}}^{\mathrm{UT}}$ has the same row sum
$
    \kappa_{\mathrm{LRA}}^{\mathrm{UT}}
    =
    \rho(1-p_{\mathrm e})+(N-2)\eta\lambda_{10}.
    \label{eq:proof_ut_lra_rho}
$
Equivalently,
\begin{equation*}
    \mathbf P_{\mathrm{LRA}}^{\mathrm{UT}}
    \mathbf 1_{N_{\mathrm{LRA}}}
    =
    \kappa_{\mathrm{LRA}}^{\mathrm{UT}}
    \mathbf 1_{N_{\mathrm{LRA}}}.
\end{equation*}
Thus,
\begin{equation*}
    \mathbf 1_{N_{\mathrm{LRA}}}^{\mathsf T}
    \left(
    \mathbf I_{N_{\mathrm{LRA}}}
    -
    \left(\mathbf P_{\mathrm{LRA}}^{\mathrm{UT}}\right)^{\mathsf T}
    \right)
    =
    \left(1-\kappa_{\mathrm{LRA}}^{\mathrm{UT}}\right)
    \mathbf 1_{N_{\mathrm{LRA}}}^{\mathsf T}.
\end{equation*}
Multiplying both sides by
$\left(
\mathbf I_{N_{\mathrm{LRA}}}
-
(\mathbf P_{\mathrm{LRA}}^{\mathrm{UT}})^{\mathsf T}
\right)^{-1}$
from the right yields
\begin{equation*}
    \mathbf 1_{N_{\mathrm{LRA}}}^{\mathsf T}
    \left(
    \mathbf I_{N_{\mathrm{LRA}}}
    -
    \left(\mathbf P_{\mathrm{LRA}}^{\mathrm{UT}}\right)^{\mathsf T}
    \right)^{-1}
    =
    \frac{1}{1-\kappa_{\mathrm{LRA}}^{\mathrm{UT}}}
    \mathbf 1_{N_{\mathrm{LRA}}}^{\mathsf T}.
\end{equation*}

By Theorem~\ref{thm:stationary_metrics}, the long-term average \ac{lra} persistence is
\begin{align*}
    \bar D_{\mathrm{LRA}}^{\mathrm{UT}}
    &=\mathbf 1_{N_{\mathrm{LRA}}}^{\mathsf T}
    \left(
    \mathbf I_{N_{\mathrm{LRA}}}
    -
    \left(\mathbf P_{\mathrm{LRA}}^{\mathrm{UT}}\right)^{\mathsf T}
    \right)^{-1}
    \boldsymbol\pi_{\mathrm{LRA}}^{\mathrm{UT}}\\
    &=
    \frac{1}{1-\kappa_{\mathrm{LRA}}^{\mathrm{UT}}}
    \mathbf 1_{N_{\mathrm{LRA}}}^{\mathsf T}
    \boldsymbol\pi_{\mathrm{LRA}}^{\mathrm{UT}}
    =
    \frac{
    \bar J_{\mathrm{LRA}}^{\mathrm{UT}}
    }{
    1-\kappa_{\mathrm{LRA}}^{\mathrm{UT}}
    }\\
    &=
    \frac{
    \bar J_{\mathrm{LRA}}^{\mathrm{UT}}
    }{
    1-\rho(1-p_{\mathrm e})-(N-2)\eta\lambda_{10}
    }.
\end{align*}

\section{Proof of Corollary~\ref{cor:opt_P1}} \label{sec:proof_cor_opt_P1}
Before proving Corollary~\ref{cor:opt_P1}, we first characterize the signs of the coefficients.
{Under the model assumptions, we have} that $\beta_2\le 0$, $\beta_3\le 0$, and $\beta_5\le 0$. In contrast, the signs of $\beta_1$  and $\beta_4$ are parameter-dependent and cannot be determined a priori. Next, to determine the optimal $p_\mathrm{a}$, we analyze the shape of $\bar{J}_{\mathrm{LRA}}^\mathrm{TS}$ through the sign of its derivative.

From Theorem~\ref{thm:polynomial_optimal_pa}, {a maximizer of the continuously extended objective} $\bar{J}_{\mathrm{LRA}}^\mathrm{TS}$ belongs to 
\begin{equation*}
    \mathcal C_F
    \triangleq
    \{0,1\}
    \cup
    \left\{
    p_{\mathrm a}\in(0,1):
    \mathcal P_F(p_{\mathrm a})=0
    \right\},
\end{equation*}
where 
$
\mathcal P_F(p_{\mathrm a})=-\beta_1\beta_3\,p_\mathrm{a}^2-2\beta_2 \beta_3\,p_\mathrm{a}+(\beta_1\beta_5-\beta_2\beta_4).
$

If $\beta_1\beta_3\neq0$, the zero points of $\mathcal P_F(p_{\mathrm a})$ are given by
\begin{equation*}
r_{\pm}=\frac{-\beta_2\beta_3\pm\sqrt{\Delta}}{\beta_1\beta_3},
\ 
\Delta\triangleq\beta_2^2\beta_3^2+\beta_1\beta_3(\beta_1\beta_5-\beta_2\beta_4).
\end{equation*}
Substituting the explicit expressions of $\beta_1$, $\beta_2$, $\beta_3$, $\beta_4$, and $\beta_5$ into $\Delta$, we obtain
\begin{equation*}
\Delta
=
p^2 q^2 p_\mathrm{h}^\mathrm{b} p_\mathrm{h}^\mathrm{e} (p+q)^4 (p+q-1)^2 (p_\mathrm{h}^\mathrm{b}-p_\mathrm{h}^\mathrm{e})^2
(p_\mathrm{h}^\mathrm{b}+p_\mathrm{h}^\mathrm{e}-p_\mathrm{h}^\mathrm{b} p_\mathrm{h}^\mathrm{e})^2
\end{equation*}
\begin{equation*}
\times
\left[p_\mathrm{h}^\mathrm{b}+(p+q)p_\mathrm{h}^\mathrm{e} (1-p_\mathrm{h}^\mathrm{b})\right]
\left[p_\mathrm{h}^\mathrm{e}+p_\mathrm{h}^\mathrm{b} (2-p-q)(1-p_\mathrm{h}^\mathrm{e})\right].
\end{equation*}
Since $p,q,p_\mathrm{h}^\mathrm{b},p_\mathrm{h}^\mathrm{e}\in[0,1]$ and $p+q\in[0,2]$, every factor in the above expression is nonnegative. Hence, $\Delta\ge 0$.

Next, differentiating $\mathcal P_F(p_\mathrm{a})$ yields
\begin{equation*}
\frac{d \mathcal P_F}{d p_\mathrm{a}}=-2\beta_1\beta_3 p_\mathrm{a}-2\beta_2\beta_3=-2\beta_3(\beta_1 p_\mathrm{a}+\beta_2),
\end{equation*}
and thus the only vertex of $\mathcal P_F$ is located at $p_\mathrm{a}^{\mathrm v}=-\frac{\beta_2}{\beta_1}$.
We next distinguish two cases based on the sign of $p_\mathrm{a}^{\mathrm v}$.

\textit{Case 1: $\beta_1<0$.}
Since $\beta_2\le 0$, the vertex of $\mathcal P_F(p_\mathrm{a})$, $p_\mathrm{a}^{\mathrm v}=-\beta_2/\beta_1\le 0$.
Moreover, $\beta_3\le 0$ and $\beta_1<0$ imply that $\mathcal P_F(p_{\mathrm a})$ is concave. Therefore, $\mathcal P_F(p_{\mathrm a})$ is decreasing on $[0,1]$.

Since $\beta_1\beta_3\ge0$, the two roots satisfy
\begin{equation*}
r_\mathrm{s}=\frac{-\beta_2\beta_3-\sqrt{\Delta}}{\beta_1\beta_3}
\le
\frac{-\beta_2\beta_3+\sqrt{\Delta}}{\beta_1\beta_3}=r_\mathrm{l}.
\end{equation*}
Because the vertex lies to the left of $[0,1]$, the smaller root satisfies $r_\mathrm{s}<0$ and only the larger root $r_\mathrm{l}$ may lie in $[0,1]$. Accordingly, $\bar{J}_{\mathrm{LRA}}^\mathrm{TS}$ is increasing, decreasing, or increasing-then-decreasing on $[0,1]$ depending on whether $r_\mathrm{l}\ge 1$, $r_\mathrm{l}\le 0$, or $r_\mathrm{l}\in(0,1)$, respectively. Therefore, $p_\mathrm{a}^\dagger$ is obtained by projecting $r_\mathrm{l}$ onto $[0,1]$, namely,
\begin{equation*}
p_\mathrm{a}^\dagger=\min\left\{1,\max\left\{\frac{-\beta_2\beta_3+\sqrt{\Delta}}{\beta_1\beta_3},0\right\}\right\}.
\end{equation*}

\textit{Case 2: $\beta_1>0$.}
Using the explicit expressions of $\beta_1$ and $\beta_2$, we obtain
\begin{equation*}
\beta_1+\beta_2
=
-pq(1-p_\mathrm{h}^\mathrm{e})\left[(p+q)(1-p_\mathrm{h}^\mathrm{b})(p_\mathrm{h}^\mathrm{b}+p_\mathrm{h}^\mathrm{e})+2(p_\mathrm{h}^\mathrm{b})^2\right],
\end{equation*}
which is less than or equal to zero, that is, $\beta_2\le -\beta_1$. Since $\beta_1>0$, it follows that the vertex of $\mathcal P_F(p_{\mathrm a})$, $
p_{\mathrm a}^{\mathrm v}=-{\beta_2}/{\beta_1}\ge 1$. Moreover, $\beta_3\le 0$ and $\beta_1>0$ imply that $\mathcal P_F(p_{\mathrm a})$ is convex. Therefore $\mathcal P_F(p_{\mathrm a})$ is decreasing on $[0,1]$.

Since $\beta_1\beta_3<0$, the roots satisfy
\begin{equation*}
r_{\mathrm{s}} = \frac{-\beta_2\beta_3+\sqrt{\Delta}}{\beta_1\beta_3}
\le
\frac{-\beta_2\beta_3-\sqrt{\Delta}}{\beta_1\beta_3} = r_{\mathrm{l}}.
\end{equation*}
Because the vertex lies {at or} to the right of $[0,1]$, the larger root satisfies $
r_{\mathrm{l}}>1$ and only the smaller root $r_{\mathrm{s}}$ may lie in $[0,1]$. Accordingly, $\bar{J}_{\mathrm{LRA}}^\mathrm{TS}$ is increasing, decreasing, or increasing-then-decreasing on $[0,1]$, depending on whether $r_{\mathrm{s}}\ge 1$, $r_{\mathrm{s}}\le 0$, or $r_{\mathrm{s}}\in(0,1)$, respectively. Therefore,
\begin{equation*}
p_\mathrm{a}^\dagger=
\min\left\{1,\max\left\{\frac{-\beta_2\beta_3+\sqrt{\Delta}}{\beta_1\beta_3},0\right\}\right\}.
\end{equation*}

{It remains to consider $\beta_1\beta_3=0$. If $\beta_3=0$, then $\beta_4=0$, so $F$ is affine. Otherwise, $\beta_1=0$. If additionally $\beta_2=0$, $F\equiv0$; otherwise, its derivative can vanish only at
$
r=-\frac{\beta_4}{2\beta_3}
=\frac{(p_{\mathrm h}^{b}+p_{\mathrm h}^{e})(p+q)}
{2p_{\mathrm h}^{b}p_{\mathrm h}^{e}(p+q-1)}.
$
Since $p_{\mathrm h}^{b}+p_{\mathrm h}^{e}\ge2p_{\mathrm h}^{b}p_{\mathrm h}^{e}$, we have $r<0$ for $p+q<1$ and $r\ge(p+q)/(p+q-1)>1$ for $p+q>1$. Thus, $F$ is monotone on $[0,1]$, and a maximizer is $p_{\mathrm a}^{\dagger}=\mathbbm 1 {\{F(1)>F(0)\}}$.}

\bibliographystyle{IEEEtran}
\bibliography{./bib_files/abrv,./bib_files/ref}

\end{document}